\definecolor{gray}{RGB}{128,128,128}
\newtheorem{thm}{Theorem}
\newtheorem{lemma}[thm]{Lemma}
\newtheorem{proof}{Proof}
\newcommand{\hA}{\hat{A}}
\newcommand{\hF}{\hat{F}}
\newcommand{\hP}{\hat{P}}
\newcommand{\hU}{\hat{U}}
\newcommand{\hV}{\hat{V}}
\newcommand{\hX}{\hat{X}}
\newcommand{\hOmega}{\hat{\Omega}}
\newcommand{\hPi}{\hat{\Pi}}
\newcommand{\hPhi}{\hat{\Phi}}
\newcommand{\hLambda}{\hat{\Lambda}}
\newcommand{\hrho}{\hat{\rho}}
\newcommand{\hpsi}{\hat{\psi}}
\newcommand{\mG}{\mathcal{G}}
\newcommand{\mH}{\mathcal{H}}
\newcommand{\mI}{\mathcal{I}}
\newcommand{\mK}{\mathcal{K}}
\newcommand{\mL}{\mathcal{L}}
\newcommand{\mS}{\mathcal{S}}
\newcommand{\ident}{\hat{1}}
\newcommand{\QED}{\hspace*{0pt}\hfill $\blacksquare$}
\newcommand{\Tr}{{\rm Tr}}
\newcommand{\rank}{{\rm rank}}
\newcommand{\supp}{{\rm supp}}
\newcommand{\spn}{{\rm span}}
\def\gauss_sym#1{{\lfloor #1 \rfloor}}
\renewcommand{\hA}{A}
\renewcommand{\hF}{F}
\renewcommand{\hP}{P}
\renewcommand{\hU}{U}
\renewcommand{\hV}{V}
\renewcommand{\hX}{X}
\renewcommand{\hOmega}{\Omega}
\renewcommand{\hPi}{\Pi}
\renewcommand{\hPhi}{\Phi}
\renewcommand{\hLambda}{\Lambda}
\renewcommand{\hrho}{\rho}
\renewcommand{\hpsi}{\psi}
\renewcommand{\ident}{\mathbbm{1}}
\newcommand{\Psipure}{\Psi_{\rm pure}}
\newcommand{\Pime}{\Pi^{(\rm e)}}
\newcommand{\hPime}{\hPi^{(\rm e)}}
\newcommand{\pime}{\pi^{(\rm e)}}
\newcommand{\Piinc}{\Pi}
\newcommand{\hPiinc}{\hPi}
\newcommand{\piinc}{\pi}
\newcommand{\mLpureme}{\mLpure^{(\rm e)}}
\newcommand{\hAme}{{\hA^{(\rm e)}}}
\newcommand{\Phime}{\Phi^{(\rm e)}}
\newcommand{\hPhime}{\hPhi^{(\rm e)}}
\newcommand{\boplus}{\diamond}
\newcommand{\s}{{(s)}}
\newcommand{\q}{{(q)}}
\renewcommand{\c}{\circ}
\newcommand{\sm}{{(1 \boplus s)}}
\newcommand{\qs}{{(q \boplus s)}}
\newcommand{\inv}[1]{\overline{#1}}
\newcommand{\mHex}{{\mH_{\rm ex}}}
\newcommand{\mHL}{{\mH_{\hLambda}}}
\newcommand{\mLpure}{\mL}
\newcommand{\mLsep}{\mL_{\rm sep}}
\newcommand{\teta}{\tilde{\eta}}
\renewcommand{\u}{{(u)}}
\newcommand{\us}{{(u \boplus s)}}
\newcommand{\tmH}{\mH_1}
\begin{document}

\preprint{APS/123-QED}

\title{Optimal quantum state discrimination with confidentiality}%

\affiliation{%
 Hitachi, Ltd., Research \& Development Group, Center for Technology Innovation - Production Engineering,
 Yokohama, Kanagawa 244-0817, Japan
}%
\affiliation{
 School of Information Science and Technology,
 Aichi Prefectural University,
 Nagakute, Aichi 480-1198, Japan
}%
\affiliation{
 Quantum Communication Research Center, Quantum ICT Research Institute, Tamagawa University,
 Machida, Tokyo 194-8610, Japan
}%
\affiliation{%
 Quantum Information Science Research Center, Quantum ICT Research Institute,
 Tamagawa University, Machida, Tokyo 194-8610, Japan
}%

\author{Kenji Nakahira}
 \affiliation{%
  Hitachi, Ltd., Research \& Development Group, Center for Technology Innovation - Production Engineering,
  Yokohama, Kanagawa 244-0817, Japan
 }%
 \affiliation{%
  Quantum Information Science Research Center, Quantum ICT Research Institute,
  Tamagawa University, Machida, Tokyo 194-8610, Japan
 }%

\author{Tsuyoshi \surname{Sasaki Usuda}}
\affiliation{
 School of Information Science and Technology,
 Aichi Prefectural University,
 Nagakute, Aichi 480-1198, Japan
}%
\affiliation{%
 Quantum Information Science Research Center, Quantum ICT Research Institute,
 Tamagawa University, Machida, Tokyo 194-8610, Japan
}%

\author{Kentaro Kato}
\affiliation{
 Quantum Communication Research Center, Quantum ICT Research Institute, Tamagawa University,
 Machida, Tokyo 194-8610, Japan
}%

\date{\today}

\begin{abstract}
 We investigate quantum state discrimination with confidentiality.
 $N$ observers share a given quantum state belonging to a finite set of known states.
 The observers want to determine the state as accurately as possible and send a discrimination result to a receiver.
 However, the observers are not allowed to get any information about which state was given.
 $N-1$ or fewer observers might try to steal the information, but if $N$ observers coexist, the honest ones
 will keep the dishonest ones from doing anything wrong.
 Assume that the state set has a certain symmetry, or more precisely, is Abelian geometrically uniform (AGU).
 We propose a protocol that realizes any optimal inconclusive measurement,
 which is a generalized version of a minimum-error measurement and an optimal unambiguous measurement,
 for any AGU state set and ensures that any combined state of $N-1$ or fewer observers
 has absolutely no information about the given state.
 Our protocol provides a method of performing a quantum measurement securely,
 which could be useful in quantum information applications.
\end{abstract}

\pacs{03.67.Hk}
\maketitle


\section{Introduction}

Suppose that a sender wants to send a classical message to a receiver
in the harsh environment, such as deep space, but
they cannot communicate directly.
Thus, the sender sends the message to a third party, called an observer,
and the observer sends it to the receiver.
Consider that the observer receives a quantum state $\hrho_m$ belonging to
a set of known quantum states, $\hrho_0, \hrho_1, \cdots, \hrho_{M-1}$,
which are mutually non-orthogonal.
The observer performs a quantum measurement on $\hrho_m$ and sends its result to
the receiver using classical communication.
However, the message is highly private and/or sensitive (e.g., a classified message),
and so the observer is not allowed to get any information about $m$.
What can the observer do to send as precise information about $m$ as possible to the receiver
while ensuring that the observer obtains no information?

We will show that it is possible to do this when two or more observers exist
and at least one of them is honest.
For simplicity, consider that there are two observers, Alice and Bob, and they receive the state $\hrho_m$.
We assume that Alice or Bob might try to steal the information about $m$ by illegal means,
but if they coexist, the honest one will keep the dishonest one from doing anything wrong.
We propose a protocol where they tells a discrimination result to the receiver, Charlie,
while ensuring that absolutely no information is leaked to Alice or Bob.
Let us explain our protocol using Fig.~\ref{fig:sharing_diagram}.
Alice and Bob first transform a given state $\hrho_m$ into $\hrho'_m$
as a preprocessing step, where $\hrho'_m$ is a (generally entangled) state of their composite system.
In this step, they cannot perform a wrong or evil action since they coexist.
They next independently measure their individual systems.
In this step, a dishonest observer may try to extract information about $m$.
They tell their outcomes to Charlie via classical communication.
This classical communication is encrypted to ensure that neither Alice nor Bob learns
the classical data that the other one transmits.
Charlie finally determines $m$ from their outcomes.
We refer to such a measurement as a {\it bipartite secure measurement}
if neither Alice nor Bob obtains any information about $m$ even if they act dishonestly
in their individual measurements.

\begin{figure}[tb]
 \centering
 \includegraphics[scale=0.8]{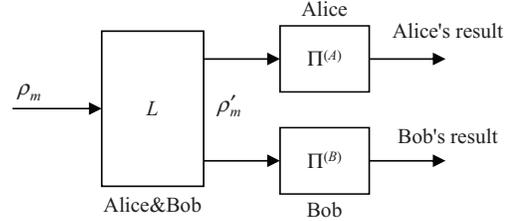}
 \caption{Data flow for a bipartite secure measurement.}
 \label{fig:sharing_diagram}
\end{figure}

As is well known, non-orthogonal states cannot be perfectly distinguished;
thus, we want to find a measurement that performs best in a certain strategy.
In one strategy, a measurement that maximizes the average correct probability
\cite{Hol-1973,Yue-Ken-Lax-1975,Hel-1976}, denoted by a minimum-error measurement,
has been investigated \cite{Bel-1975,Ban-Kur-Mom-Hir-1997,And-Bar-Gil-Hun-2002,Cho-Hsu-2003,Bar-Cro-2009}.
In another strategy, a measurement that achieves unambiguous, i.e., error-free, discrimination
with the minimum average failure probability \cite{Iva-1987,Die-1988,Per-1988},
denoted by an optimal unambiguous measurement,
has also been studied \cite{Eld-2003-unamb,Her-2007,Kle-Kam-Bru-2010,Ber-Fut-Fel-2012}.
Recently, as a more general measurement, 
a measurement that maximizes the average correct probability
with a fixed average failure probability,
which we refer to as an optimal inconclusive measurement (OIM), has been investigated
\cite{Che-Bar-1998-inc,Fiu-Jez-2003}.
Minimum-error measurements and optimal unambiguous measurements can be interpreted
as special cases of OIMs.

Remarkably, we show that any OIM, with any average failure probability,
for the state set $\{ \hrho_m \}$ can be realized with a bipartite secure measurement
if $\{ \hrho_m \}$ has certain symmetry properties,
or more precisely, if it is a (not necessarily pure) Abelian geometrically uniform (AGU) state set
\cite{Eld-For-2001,Nak-Usu-2013-group}.
Such a state set is a broad class of quantum state sets, including
phase shift keyed state sets, pulse position modulated state sets,
and linear codes with binary letter-states \cite{Eld-Meg-Ver-2004,Car-Pie-2010,Usu-Tak-Hat-Hir-1999}.
We also investigate the multipartite case
and derive that a {\it multipartite secure measurement} can realize any OIM if $\{ \hrho_m \}$
is an AGU state set.
For simplicity, throughout the main body of the paper
we consider only three linearly independent cyclic pure states.
In Appendix~\ref{append:AGU}, we will extend our technique to AGU states.

Although our scheme and secret sharing might seem somewhat similar,
they are quite different.
In classical secret sharing \cite{Bla-1979,Sha-1979},
a classical, i.e., perfectly distinguishable, secret is split among several parties.
In addition, a method for sharing an arbitrary unknown quantum state has been proposed \cite{Cle-Got-Lo-1999},
which provides a quantum version of secret sharing.
In such schemes, the parties share a classical or quantum state,
which can be perfectly reconstructed when a sufficient number of parties cooperate.
In contrast, in a bipartite secure measurement, the observers share a classical message encoded in quantum states
that are not perfectly distinguishable,
i.e., a given quantum state cannot be perfectly reconstructed from the measurement outcome.
Moreover, in our scenario, the observers cannot communicate with each other after preprocessing.
Our scheme provides a method for optimally discriminating between quantum states
with confidentiality using the basic idea of secret sharing,
though our technique is drastically different from that of secret sharing.
Note that combining quantum cryptography with classical secret sharing to
protect against eavesdropping has also been proposed \cite{Hil-1999}.

\section{Formulation}

Let us consider discrimination between $M$ quantum states
represented by density operators $\hrho_m$ $~(m \in \mI_M)$
with prior probabilities $\xi_m$, where $\mI_k = \{ 0, 1, \cdots, k-1 \}$.
$\hrho_m$ is positive semidefinite and has unit trace.
In this paper, we assume equal probabilities $\xi_m = 1/M$ for any $m \in \mI_M$.
If each $\hrho_m$ is rank one, in which case $\hrho_m$ can be expressed as
$\hrho_m = \ket{\psi_m} \bra{\psi_m}$ for any $m \in \mI_M$,
then $\Psi = \{ \hrho_m : m \in \mI_M \}$
(or $\Psipure = \{ \ket{\psi_m} : m \in \mI_M \}$) is referred to as a pure state set.
Moreover, if $\{ \ket{\psi_m} : m \in \mI_M \}$ are linearly independent,
then $\Psi$ (or $\Psipure$) is called a linearly independent pure state set.

Let $\mH$ be the space spanned by $\{ \hrho_m : m \in \mI_M \}$.
Also, let $\ident_\mK$ be the identity operator on a Hilbert space $\mK$.
If there exists a unitary operator $\hV$ on $\mH$
such that $\hV^M = \ident_\mH$ and $\hrho_m = \hV^m \hrho_0 \hV^{-m}$ for any $m \in \mI_M$,
then $\Psi$ is referred to as a cyclic state set.
In particular, if the pure state set $\Psipure$ is cyclic,
then there exists a unitary operator $\hV$ on $\mH$
such that $\hV^M = \ident_\mH$ and $\ket{\psi_m} = \hV^m \ket{\psi_0}$ for any $m \in \mI_M$
(when we choose proper global phases).
Cyclic states are special cases of AGU states.
We will give the definition of AGU states in Appendix~\ref{subsec:preliminaries}.

A quantum measurement that may return an inconclusive answer
can be described by a positive operator-valued measure (POVM)
with $M+1$ detection operators, $\Pi = \{ \hPi_m : m \in \mI_M^? \}$,
where $\mI_M^?$ is the set formed by adding element $`?'$ to the set $\mI_M$.
The detection operator $\hPi_m$ with $m \in \mI_M$ corresponds to identification of
the state $\hrho_m$, while $\hPi_?$ corresponds to the inconclusive answer.
An OIM is a measurement maximizing the average correct probability
under the constraint that the average failure probability is $p$ $~(0 \le p \le 1)$
\cite{Che-Bar-1998-inc,Eld-2003-inc,Fiu-Jez-2003};
i.e., an OIM is an optimal solution of the following problem:
\begin{eqnarray}
 \begin{array}{ll}
  {\rm maximize} & \displaystyle \sum_{m \in \mI_M} \xi_m \Tr(\hrho_m \hPi_m) \\
  {\rm subject~to} & \displaystyle \sum_{m \in \mI_M} \xi_m \Tr(\hrho_m \hPi_?) = p. \label{eq:inc_problem} \\
 \end{array}
\end{eqnarray}
A minimum error measurement is a special case of an OIM, which satisfies $p = 0$.
In this case, we can assume without loss of generality that
$\Pi$ has $M$ detection operators, i.e., $\Pi = \{ \hPi_m : m \in \mI_M \}$,
since $\Tr(\hrho_m \hPi_?) = 0$ holds for any $m \in \mI_M$.

To realize an $N$-partite secure measurement with $N \ge 2$,
we consider the procedure performed by $N$ observers and one receiver:
\begin{enumerate}[Step~1)]
 \setlength{\parskip}{0cm}
 \setlength{\itemsep}{0cm}
 \item The observers first perform an operation together to transform a given state $\hrho_m$
	   into a state of their composite system, denoted by $\hrho'_m$.
 \item Each observer independently performs a measurement on the state of his/her individual subsystem.
 \item Each observer sends his/her outcome to the receiver via private classical communication.
 \item The receiver determines $m$ from the observers' outcomes.
\end{enumerate}
Assume that in Step~1, which we call the preprocessing step,
the observers cannot perform a wrong or evil action.
In Step~2, dishonest observers may try to extract information about $m$
with wrong measurements.
Moreover, if $N \ge 3$, then $N-1$ or fewer dishonest observers may collaborate to get
the information.
We refer to a measurement according to this procedure as
an $N$-partite secure measurement if any combined state of $N-1$ or fewer observers
has absolutely no information about $m$,
or, equivalently, if any $N-1$ or fewer dishonest observers get absolutely no information about $m$.

\section{Realizing a minimum-error measurement} \label{sec:me}


Let us begin by considering a minimum-error measurement, and later extend it to an OIM.
%
\begin{thm} \label{thm:me}
 A minimum-error measurement for three linearly independent cyclic pure states
 can be realized with a bipartite secure measurement.
\end{thm}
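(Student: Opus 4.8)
The plan is to use the cyclic symmetry twice: first to fix the form of the optimal measurement, then to split its outcome into two individually uninformative shares. Because $\Psipure$ is cyclic with equal priors, the states are geometrically uniform, so an optimal minimum-error measurement may be taken covariant under $\hV$ \cite{Eld-For-2001}; that is, its detection operators obey $\hPi_m = \hV^m \hPi_0 \hV^{-m}$ with indices read modulo $3$. (Linear independence additionally makes this a rank-one von Neumann measurement, but the argument below needs only covariance.) The guiding idea is to arrange that Charlie recovers $m$ as $m \equiv a + b \pmod 3$, where Alice holds a share $a$ and Bob a share $b$, and where the split is generated by a uniformly random group shift realized through entanglement during the preprocessing step.

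Concretely, in Step~1 I would adjoin to the signal system $S$ (carrying $\hrho_m = \ket{\psi_m}\bra{\psi_m}$) a three-dimensional ancilla $A$ prepared in $3^{-1/2}\sum_r \ket{r}_A$, and apply the controlled shift $\hU = \sum_r \hV^{-r} \otimes \ket{r}\bra{r}_A$. Since $\hV^{-r}\ket{\psi_m} = \ket{\psi_{m-r}}$, this produces
\[
 \ket{\Phi_m} = \frac{1}{\sqrt{3}} \sum_r \ket{\psi_{m-r}}_S \ket{r}_A .
\]
Alice then takes $A$ and Bob takes $S$. In Step~2 Alice measures $A$ in the basis $\{\ket{r}\}$, obtaining $a$, and Bob applies the covariant POVM $\{\hPi_b\}$ to $S$, obtaining $b$; in Step~4 Charlie announces $a + b \bmod 3$.

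For correctness, since the $\ket{r}_A$ are orthonormal with equal weights, Alice's outcome is uniform and conditioning on $a$ collapses Bob's system to $\ket{\psi_{m-a}}$, so $P(a,b \mid m) = \frac{1}{3}\Tr(\hPi_b \hrho_{m-a})$. Covariance gives $\Tr(\hPi_{i+t}\hrho_{j+t}) = \Tr(\hPi_i \hrho_j)$, whence $\Tr(\hPi_b \hrho_{m-a})$ depends only on $(a+b) - m$; summing over the uniform $a$ shows that Charlie outputs $m'$ with probability exactly $\Tr(\hPi_{m'}\hrho_m)$, reproducing the minimum-error statistics verbatim. For security I would compute the marginals of $\ket{\Phi_m}$: Bob's state $\Tr_A(\ket{\Phi_m}\bra{\Phi_m}) = \frac{1}{3}\sum_r \hrho_{m-r} = \frac{1}{3}\sum_j \hrho_j$ is the group average, independent of $m$, while Alice's state $\Tr_S(\ket{\Phi_m}\bra{\Phi_m})$ has entries $\frac{1}{3}\braket{\psi_{m-r'}|\psi_{m-r}} = \frac{1}{3}\bra{\psi_0}\hV^{r'-r}\ket{\psi_0}$, depending only on $r'-r$ and again not on $m$.

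The crux, and the step I expect to be the main obstacle, is precisely this security check, since it rests entirely on the group structure: a uniform shift $r$ sweeps the label $m-r$ over the whole cycle, forcing Bob's marginal to be the $m$-independent average, and the pairwise overlaps depend only on index differences, forcing Alice's marginal to be $m$-independent as well. Because the operator in each party's hands is then literally the same for every $m$, any measurement they perform, honest or dishonest, yields $m$-independent statistics, which is exactly the bipartite-security requirement. It remains only to confirm that $\hU$ is an admissible joint unitary for the preprocessing step and, if a projective local measurement is preferred for Bob, to dilate $\{\hPi_b\}$ on a private ancilla via Naimark so that Step~2 is a bona fide local measurement.
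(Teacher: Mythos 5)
Your proposal is correct: the covariance of the minimum-error measurement for cyclic states, the collapse argument giving $P(a,b\mid m)=\frac{1}{3}\Tr(\hPi_b\hrho_{m\ominus a})$, the covariance identity that makes Charlie's output distribution exactly $\Tr(\hPi_{m'}\hrho_m)$, and the two marginal computations --- which are precisely the paper's necessary-and-sufficient security condition, Eq.~(\ref{eq:TrBC}) --- are all sound. Your route, however, is genuinely different from the paper's. The paper writes the preprocessing down abstractly, as the isometry $\hAme=\sum_k\ket{\eta_k}\bra{\pime_k}$ sending the $k$-th detection vector to the correlated share state $\ket{\eta_k}=\frac{1}{\sqrt{3}}\sum_j\ket{a_j}\ket{b_{k\ominus j}}$, then proves security by an index-relabeling computation of partial traces (Eq.~(\ref{eq:TrB_psim_me})) and correctness by the componentwise evaluation $\braket{\psi'_m|\hPhime_n|\psi'_m}=|\chi_{n\ominus m}|^2$ (Eq.~(\ref{eq:psi_Phi_psi})). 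You instead build the shared state operationally, as a coherent one-time pad: a uniform ancilla controls the group shift $\hV^{-r}$, Alice keeps the pad, Bob keeps the padded signal and performs the optimal measurement itself; security follows from group averaging (Bob's marginal is the ensemble average $\frac{1}{3}\sum_j\hrho_j$, Alice's Gram-type marginal depends only on $r'\ominus r$) and correctness from conditioning plus covariance of $\Tr(\hPi_i\hrho_j)$. Amusingly, the two protocols coincide: expanding $\ket{\psi_{m\ominus r}}$ in the basis $\{\ket{\pime_n}\}$ shows your $\ket{\Phi_m}$ \emph{is} the paper's $\ket{\psi'_m}$ under the identification $\ket{r}_A\leftrightarrow\ket{a_r}$ and $\ket{\pime_n}\leftrightarrow\ket{b_n}$, with the same local ONB measurements and the same decoding $a\oplus b$; only the derivation and the verification arguments differ. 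What your derivation buys is transparency and reach: it generalizes at once to cyclic groups of any order (and to Abelian symmetries represented by unitaries, though not to the anti-unitary symmetries the paper's AGU definition permits), and it extends directly to the inconclusive case by letting Bob perform a covariant OIM (dilated on a private ancilla if projective local measurements are insisted upon), sidestepping the projective cyclic extension $\Omega$ that the paper constructs for Theorem~\ref{thm:inc}. What the paper's formulation buys is a symmetric template in which both observers measure abstract local ONBs and play interchangeable roles --- exactly the form it then iterates for the multipartite generalization in the appendix.
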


\begin{proof}
Assume that $\Psipure = \{ \ket{\psi_m} : m \in \mI_3 \}$ is a three linearly independent pure state set.
Also, assume that $\Psipure$ is cyclic; i.e.,
there exists a unitary operator $\hV$ on $\mH$ such that
$\hV^3 = \ident_\mH$ and $\ket{\psi_m} = \hV^m \ket{\psi_0}$ for any $m \in \mI_3$.
Let $\Pime = \{ \hPime_m : m \in \mI_3 \}$ be
a POVM representing a minimum-error measurement on $\mH$.
$\Pime$ is always projective, and $\rank~\hPime_m = 1$ holds \cite{Ken-1973},
which means that $\hPime_m$ is expressed by $\hPime_m = \ket{\pime_m} \bra{\pime_m}$
with an orthonormal basis (ONB) $\{ \ket{\pime_m} : m \in \mI_3 \}$ in $\mH$.
Moreover, $\Pime$ is cyclic, i.e., $\ket{\pime_m} = \hV^m \ket{\pime_0}$ holds
for any $m \in \mI_3$ \cite{Bel-1975}.

We can see that a necessary and sufficient condition for a bipartite secure measurement is that
the candidate states after preprocessing, $\{ \hrho'_m : m \in \mI_3 \}$, satisfy
\begin{eqnarray}
 \Tr_A~\hrho'_j &=& \Tr_A~\hrho'_k, ~~~ \Tr_B~\hrho'_j = \Tr_B~\hrho'_k, \label{eq:TrBC}
\end{eqnarray}
for any $j, k \in \mI_3$, where $\Tr_A$ and $\Tr_B$, respectively, represent
the partial traces over Alice's and Bob's systems,
which implies that whether a given procedure is a bipartite secure measurement is determined only by
the preprocessing.
Indeed, suppose by contradiction that $\Tr_A~\hrho'_j \neq \Tr_A~\hrho'_k$ holds
for certain $j,k \in \mI_3$;
then, there exists Bob's measurement that gives some information to distinguish $\hrho'_j$ and $\hrho'_k$.
It follows that,
in order to obtain a bipartite secure measurement that can realize the minimum-error discrimination,
we must consider a measurement such that neither Alice nor Bob knows any information
about the outcome obtained by Charlie.
This means that Alice's measurement outcome must be independent of Charlie's outcome,
and so must that of Bob.
To realize this,
let us consider preprocessing that transforms $\ket{\pime_m}$,
corresponding to Charlie's outcome, into $\ket{\eta_m}$
such that $\Tr_A~\ket{\eta_m}\bra{\eta_m}$ and $\Tr_B~\ket{\eta_m}\bra{\eta_m}$ are independent of $m$.
To be concrete, let
\begin{eqnarray}
 \ket{\eta_m} &=& \frac{1}{\sqrt{3}} \sum_{k=0}^2 \ket{a_k} \ket{b_{m \ominus k}}, \label{eq:eta_m}
\end{eqnarray}
where $\ominus$ denotes the subtraction modulo $3$,
and $\{ \ket{a_m} : m \in \mI_3 \}$ and $\{ \ket{b_m} : m \in \mI_3 \}$ are ONBs
in Alice's and Bob's spaces, respectively.
Such preprocessing can be realized with the completely positive trace-preserving (CPTP) map
$\mLpureme(X) = \hAme X \hAme^\dagger$
($\dagger$ denotes the conjugate transpose),
where $\hAme = \sum_{k=0}^2 \ket{\eta_k} \bra{\pime_k}$.
This preprocessing turns the given state $\ket{\psi_m}$ into $\ket{\psi'_m} = \hAme \ket{\psi_m}$.
Now, we show that $\hrho'_m = \ket{\psi'_m}\bra{\psi'_m}$ satisfies Eq.~(\ref{eq:TrBC}).
Let $\chi_k = \braket{\pime_k | \psi_0}$; then,
since $\{ \ket{\pime_m} \}$ and $\{ \ket{\psi_m} \}$ are cyclic,
$\braket{\pime_k | \psi_m} = \chi_{k \ominus m}$ holds,
which gives $\ket{\psi'_m} = \sum_{k=0}^2 \chi_{k \ominus m} \ket{\eta_k}$.
Thus, from Eq.~(\ref{eq:eta_m}), we have
\begin{eqnarray}
 \Tr_A~\hrho'_m &=& \sum_{j,k,l=0}^2 \frac{\chi_{j \ominus m} \chi_{k \ominus m}^*}{3}
  \ket{b_{j \ominus l}} \bra{b_{k \ominus l}} \nonumber \\
 &=& \sum_{j',k',l'=0}^2 \frac{\chi_{j'} \chi_{k'}^*}{3} \ket{b_{j' \ominus l'}} \bra{b_{k' \ominus l'}},
  \label{eq:TrB_psim_me}
\end{eqnarray}
where $j' = j \ominus m$, $k' = k \ominus m$, and $l' = l \ominus m$.
This equation means that $\Tr_A~\hrho'_m$ is independent of $m$.
In the same way, we can easily derive that $\Tr_B~\hrho'_m$ is also independent of $m$.
Therefore, Eq.~(\ref{eq:TrBC}) holds.

The last thing we have to show is that
the minimum-error discrimination can be realized with
only the local operations to the state $\ket{\psi'_m}$.
Now, we consider the following procedure:
Alice and Bob independently perform measurements for $\ket{\psi'_m}$
in the ONBs $\{ \ket{a_n} \}$ and $\{ \ket{b_n} \}$,
and then send their outcomes $j$ and $k$, corresponding to $\ket{a_j}$ and $\ket{b_k}$, to Charlie,
respectively.
Charlie records his result as $j \oplus k$, where $\oplus$ is the addition modulo $3$.
It follows that this procedure can be represented by the POVM $\Phime = \{ \hPhime_m : m \in \mI_3 \}$ with
\begin{eqnarray}
 \hPhime_m &=& \sum_{k=0}^2 \ket{a_k}\bra{a_k} \otimes \ket{b_{m \ominus k}}\bra{b_{m \ominus k}}
  \label{eq:Phi_me}. \nonumber
\end{eqnarray}
We obtain
\begin{eqnarray}
 \braket{\psi'_m | \hPhime_n | \psi'_m} &=&
  \sum_{k=0}^2 \left| \bra{a_k}\bra{b_{n \ominus k}} \sum_{t=0}^2 \chi_{t \ominus m} \ket{\eta_t} \right|^2
  \nonumber \\
 &=& |\chi_{n \ominus m}|^2 = \braket{\psi_m | \hPime_n | \psi_m}, \label{eq:psi_Phi_psi}
\end{eqnarray}
which indicates that this procedure can realize the minimum-error discrimination.
\QED
\end{proof}

\section{Realizing an optimal inconclusive measurement (OIM)} \label{sec:OIM}

We extend the argument of Sec.~\ref{sec:me} to an OIM in the following theorem.
\begin{thm} \label{thm:inc}
 An OIM with any average failure probability
 for three linearly independent cyclic pure states can be realized with a bipartite secure measurement.
\end{thm}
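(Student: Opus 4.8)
The plan is to mirror the proof of Theorem~\ref{thm:me}, with the essential new work being a secure treatment of the inconclusive outcome. First I would invoke the cyclic symmetry: as in the minimum-error case one may assume the OIM is itself cyclic, so that the conclusive operators satisfy $\hPi_m = \hV^m \hPi_0 \hV^{-m}$ $(m \in \mI_3)$ and the inconclusive operator $\hPi_?$ commutes with $\hV$. Since the three states are linearly independent and span $\mH$ while $\hV^3 = \ident_\mH$, the operator $\hV$ must have the three cube roots of unity as its simple eigenvalues; hence every operator commuting with $\hV$ is diagonal in its (non-degenerate) eigenbasis, and in particular the conclusive operators may be taken rank one, $\hPi_m = \ket{\mu_m}\bra{\mu_m}$ with $\ket{\mu_m} = \hV^m \ket{\mu_0}$.

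The key step is to fold the inconclusive outcome into the same cyclic picture. Because $\hPi_?$ commutes with $\hV$ and is diagonal in its eigenbasis, I can write it as a single cyclic orbit $\hPi_? = \sum_{m \in \mI_3} \ket{\nu_m}\bra{\nu_m}$ with $\ket{\nu_m} = \hV^m \ket{\nu_0}$. Then $\{ \ket{\mu_m}\bra{\mu_m}, \ket{\nu_m}\bra{\nu_m} : m \in \mI_3 \}$ is a rank-one cyclic POVM on $\mH$ (a resolution of the identity built from two cyclic orbits), and the OIM is recovered by relabelling the $\nu$-orbit outcomes as ``$?$''. Following Eq.~(\ref{eq:eta_m}), I would encode each orbit into entangled states carrying a sector flag: let $\ket{\eta^{\mu}_m} = \frac{1}{\sqrt 3}\sum_k \ket{a_k}\ket{0}\ket{b_{m\ominus k}}\ket{0}$ and $\ket{\eta^{\nu}_m} = \frac{1}{\sqrt 3}\sum_k \ket{a_k}\ket{1}\ket{b_{m\ominus k}}\ket{1}$, where the extra qubit registers (one held by Alice, one by Bob) flag the conclusive sector ($0$) versus the inconclusive sector ($1$). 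The preprocessing is the isometry $W = \sum_m (\ket{\eta^{\mu}_m}\bra{\mu_m} + \ket{\eta^{\nu}_m}\bra{\nu_m})$; since the six $\ket{\eta}$'s are orthonormal and the POVM resolves the identity, $W^\dagger W = \ident_\mH$, so $W$ is a legitimate Naimark dilation implementable in Step~1. Alice and Bob then measure the bases $\{\ket{a_k}\ket{s}\}$ and $\{\ket{b_k}\ket{t}\}$ and send their outcomes; Charlie reads the (perfectly correlated) flag to decide the sector, outputs $j \oplus k$ in the conclusive sector, and ``$?$'' in the inconclusive one.

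The verification then runs parallel to Theorem~\ref{thm:me}. Writing $\ket{\psi'_m} = W\ket{\psi_m}$ and using cyclicity, the overlaps take the shift-covariant form $\braket{\mu_{m'}|\psi_m} = c^{\mu}_{m'\ominus m}$ and $\braket{\nu_{m'}|\psi_m} = c^{\nu}_{m'\ominus m}$, so that $\ket{\psi'_m} = \sum_{m'}( c^{\mu}_{m'\ominus m}\ket{\eta^{\mu}_{m'}} + c^{\nu}_{m'\ominus m}\ket{\eta^{\nu}_{m'}})$. Measuring in the local bases reproduces $\Tr(\hrho_m \hPi_{m'})$ for each conclusive label and $\sum_{m'}\Tr(\hrho_m \ket{\nu_{m'}}\bra{\nu_{m'}}) = \Tr(\hrho_m \hPi_?) = p$ for ``$?$'', so the procedure realizes the OIM. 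For security I would compute $\Tr_B \ket{\psi'_m}\bra{\psi'_m}$ and $\Tr_A \ket{\psi'_m}\bra{\psi'_m}$: the $\mu$- and $\nu$-sectors have orthogonal flag registers, so their cross terms drop out of each partial trace, and within each sector the cyclic-shift encoding makes the marginal independent of $m$ exactly as in Eq.~(\ref{eq:TrB_psim_me}); hence Eq.~(\ref{eq:TrBC}) holds and the measurement is bipartite secure.

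The main obstacle is precisely this secure handling of the inconclusive outcome, which has no analogue in Theorem~\ref{thm:me}. It is not enough that the failure probability $\Tr(\hrho_m \hPi_?) = p$ is $m$-independent (this is automatic by cyclic symmetry); one must also ensure that the joint statistics of the ``$?$'' event with each observer's local data carry no information about $m$. The device that makes this work is the decomposition of $\hPi_?$ into a single cyclic orbit together with the orthogonal flag sectors, which lets the inconclusive branch inherit the same $m$-independent marginals as the conclusive branch while still allowing Charlie to identify it. The structural input I would pin down most carefully is that the OIM for three linearly independent cyclic states admits rank-one conclusive operators and a $\hV$-diagonal $\hPi_?$; given that $\hV$ has simple spectrum this is where the rigorous argument should be concentrated.
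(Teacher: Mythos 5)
Your proposal is correct, but it reaches the theorem by a genuinely different route than the paper. The paper never decomposes $\hPi_?$ inside $\mH$: it first Naimark-dilates the whole OIM to a projective, cyclic PVM $\Omega$ on a six-dimensional extension $\mHex$, building the inconclusive block from the minimum-error vectors $\ket{\pime_m}$ via isometries $\hF_0,\hF_1$ satisfying $\hP\hF_0 = (\ident_\mH - \hPi_?)^{1/2}$ and $\hP\hF_1 = \hPi_?^{1/2}$, and only then re-uses the Theorem~\ref{thm:me} encoding, with the sector label $s$ secret-shared between the two observers through the XOR structure $\ket{a_k^\q}\ket{b_{m\ominus k}^\qs}$. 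You instead stay in $\mH$: since linear independence of the three states forces $\hV$ to have the three cube roots of unity as simple eigenvalues (a Vandermonde argument), the $\hV$-commuting operator $\hPi_?$ can indeed be written as a single cyclic orbit $\sum_m \hV^m \ket{\nu_0}\bra{\nu_0}\hV^{-m}$ (choose $\ket{\nu_0}$ with squared amplitudes $\lambda_j/3$ in the eigenbasis; the character sum kills the off-diagonal terms), giving a six-outcome rank-one cyclic POVM that you dilate directly into the bipartite space with correlated flag qubits. Your isometry $W$, the statistics check, and the partial-trace computation are all sound, so Eq.~(\ref{eq:TrBC}) holds and the scheme is secure under the paper's criterion — note, though, that in your protocol each observer individually learns the sector (conclusive vs.\ inconclusive) from the flag, which is harmless only because cyclicity makes the sector statistics $m$-independent; the paper's XOR-sharing hides even that. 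What your route buys is economy: no minimum-error measurement, no Moore--Penrose inverse, no intermediate Naimark step. What the paper's heavier machinery buys is generality: the appendix extends the same $\hF_s$-based dilation to arbitrary AGU mixed states, where the symmetry operators can have degenerate spectra and $\rank~\hrho_m > 1$, so the inconclusive operator need not be a single group orbit of rank-one operators; your simple-spectrum trick is tailored to the three-state cyclic case, which is all this theorem claims. One blemish to fix: your inference ``hence \ldots the conclusive operators may be taken rank one'' is a non sequitur, since the individual $\hPi_m$ do not commute with $\hV$ and the diagonality argument does not apply to them; rank-one-ness of the conclusive elements is a separate structural fact about OIMs for linearly independent pure states, which the paper imports from Ref.~\cite{Eld-2003-inc}, and you should do the same rather than concentrate effort on proving it (as your closing paragraph suggests).
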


\begin{proof}
Let $\Psipure = \{ \ket{\psi_m} : m \in \mI_3 \}$ be a set of three linearly independent cyclic pure states.
Also, let $\Piinc = \{ \hPiinc_m : m \in \mI_3^? \}$ be an OIM on $\mH$ for $\Psipure$.
$\hPiinc_m$ $~(m \in \mI_3)$ is rank one and thus can be expressed in the form
$\hPiinc_m = \ket{\piinc_m} \bra{\piinc_m}$ \cite{Eld-2003-inc}.
In contrast, $\hPi_?$ is generally not rank one.
Assume without loss of generality that $\Piinc$ is cyclic,
i.e., $\ket{\piinc_m} = \hV^m \ket{\piinc_0}$ holds \cite{Eld-2003-inc}.

In the proof of Theorem~\ref{thm:me},
to realize a minimum-error measurement with a bipartite secure measurement,
we exploited the fact that the POVM $\Pime$ is projective and cyclic.
We want to apply a similar approach to an OIM;
however, $\Piinc$ is generally non-projective.
We consider, instead of $\Piinc$, an OIM that is projective and cyclic.
Let $\mHex$ be a six-dimensional Hilbert space including $\mH$,
and $\Omega = \{ \Omega_m : m \in \mI_3^? \}$ be a projective measurement on $\mHex$
expressed as
\begin{eqnarray}
 \Omega_m &=& \ket{\omega_m^{(0)}} \bra{\omega_m^{(0)}}, ~~~ m \in \mI_3, \nonumber \\
 \Omega_? &=& \sum_{m=0}^2 \ket{\omega_m^{(1)}} \bra{\omega_m^{(1)}}, \nonumber
\end{eqnarray}
where $\{ \ket{\omega_m^\s} : m \in \mI_3, s \in \mI_2 \}$ is an ONB in $\mHex$.
Let $\hP$ be the orthogonal projection operator from $\mHex$ to $\mH$.
Assume that $\Omega$ is an OIM for $\Psipure$,
which satisfies $\hP \Omega_m \hP^\dagger = \hPi_m$ for any $m \in \mI_3^?$,
and that for each $s \in \mI_2$, $\{ \hP \ket{\omega_m^\s} : m \in \mI_3 \}$ is cyclic,
i.e., we have
\begin{eqnarray}
 \hP \ket{\omega_m^\s} = \hV^m \hP \ket{\omega_0^\s} \label{eq:tpi_ms}, ~~~ m \in \mI_3. \nonumber
\end{eqnarray}
As will be described later, these assumptions hold if we properly choose an ONB $\{ \ket{\omega_m^\s} \}$.
Now, under these assumptions, we show that a bipartite secure measurement
can realize an OIM for $\Psipure$.

First, we show preprocessing in which a bipartite secure measurement is possible,
i.e., Eq.~(\ref{eq:TrBC}) holds.
Consider that Alice and Bob perform the preprocessing represented by
the CPTP map $\mLpure(X) = A X A^\dagger$, where
\begin{eqnarray}
 A &=& \sum_{s=0}^1 \sum_{m=0}^2 \ket{\eta_m^\s} \bra{\omega_m^\s}, \nonumber \\
 \ket{\eta_m^\s} &=& \frac{1}{\sqrt{6}} \sum_{q=0}^1 \sum_{k=0}^2 \ket{a_k^\q}
  \ket{b_{m \ominus k}^\qs} \label{eq:eta_ms}, ~~~ s \in \mI_2, m \in \mI_3. \nonumber
\end{eqnarray}
$\boplus$ is the addition modulo $2$,
and $\{ \ket{a_m^\s} : m \in \mI_3, s \in \mI_2 \}$ and
$\{ \ket{b_m^\s} : m \in \mI_3, s \in \mI_2 \}$ are ONBs in Alice's and Bob's systems.
Since $\{ \hP \ket{\omega_m^\s} \}$ and $\{ \ket{\psi_m} \}$ are cyclic,
$\braket{\omega_k^\s | \psi_m} = \braket{\omega_{k \ominus m}^\s | \psi_0}$ holds.
Thus, we can verify that the state after preprocessing, $\ket{\psi'_m} = A\ket{\psi_m}$,
satisfies Eq.~(\ref{eq:TrBC})
in the same way as in Eq.~(\ref{eq:TrB_psim_me}).

Next, we show that an OIM can be realized with
the following procedure:
Alice and Bob independently perform the measurements for the state $\ket{\psi'_m}$
in the ONBs $\{ \ket{a_m^\s} \}$ and $\{ \ket{b_m^\s} \}$
and send their outcomes (denoted by $\ket{a_j^\q}$ and $\ket{b_k^\s}$) to Charlie, respectively.
Charlie records his result as $j \oplus k$ if $q = s$ and ``failure'' otherwise.
This procedure can be represented by the POVM $\Phi = \{ \hPhi_m : m \in \mI_3^? \}$ with
\begin{eqnarray}
 \Phi_m &=& \sum_{q=0}^1 \sum_{k=0}^2
  \ket{a_k^\q}\bra{a_k^\q} \otimes \ket{b_{m \ominus k}^\q}\bra{b_{m \ominus k}^\q},
  ~ m \in \mI_3, \nonumber \\
 \Phi_? &=& \sum_{q=0}^1 \sum_{m,k=0}^2
  \ket{a_k^\q}\bra{a_k^\q} \otimes \ket{b_{m \ominus k}^{(1 \boplus q)}}\bra{b_{m \ominus k}^{(1 \boplus q)}}.
  \nonumber
\end{eqnarray}
In a similar way to Eq.~(\ref{eq:psi_Phi_psi}), we can easily verify that
$\braket{\psi'_m | \Phi_k | \psi'_m} = \braket{\psi_m | \Omega_k | \psi_m}$.
Therefore, this procedure realizes an OIM.

Finally, we have to show that an ONB $\{ \ket{\omega_m^\s} \}$ exists such that
$\Omega$ is an OIM for $\Psipure$
and $\{ \hP \ket{\omega_m^\s} \}$ is cyclic.
Let $\{ \ket{\phi_m} : m \in \mI_3 \}$ be an ONB in $\mH$ such that
the Schatten decomposition of $\hPi_?$ is represented by
$\hPi_? = \sum_{k=0}^2 \lambda_k \ket{\phi_k} \bra{\phi_k}$.
We choose an ONB $\{ \ket{\phi_m^\s} : m \in \mI_3, s \in \mI_2 \}$ in $\mHex$
such that
\begin{eqnarray}
 \hP \ket{\phi_m^{(0)}} &=& \sqrt{1 - \lambda_m} \ket{\phi_m}, \nonumber \\
 \hP \ket{\phi_m^{(1)}} &=& \sqrt{\lambda_m} \ket{\phi_m}.
  \label{eq:inc_phi}
\end{eqnarray}
This implies that the one-dimensional subspace $\spn(\ket{\phi_m})$ of $\mH$ is associated with
the two dimensional subspace $\spn(\ket{\phi_m^{(0)}}, \ket{\phi_m^{(1)}})$ of $\mHex$.
Let $\hF_s = \sum_{k=0}^2 \ket{\phi_k^\s} \bra{\phi_k}$, which is an isometric mapping from $\mH$ to
$\spn(\ket{\phi_0^\s}, \ket{\phi_1^\s}, \ket{\phi_2^\s})$,
and $\{ \ket{\nu_m} : m \in \mI_3 \}$ be an ONB in $\mH$
satisfying $\hLambda \ket{\nu_m} = \ket{\piinc_m}$,
where $\hLambda = (\ident_\mH - \hPiinc_?)^{1/2}$ (such an ONB always exists \cite{remark_Omega}).
We choose $\ket{\omega_m^\s}$ as $\ket{\omega_m^{(0)}} = \hF_0 \ket{\nu_m}$
and $\ket{\omega_m^{(1)}} = \hF_1 \ket{\pime_m}$,
where $\ket{\pime_m}$ is a detection vector of the minimum-error measurement $\Pime$.
We show that $\{ \ket{\omega_m^\s} \}$ is an ONB in $\mHex$ that we sought.
From Eq.~(\ref{eq:inc_phi}) and the definition of $\hF_s$,
we can easily verify that $\hP \hF_0 = \hLambda$ and $\hP \hF_1 = \hPiinc_?^{1/2}$ hold.
The former equation yields
\begin{eqnarray}
 \hP \ket{\omega_m^{(0)}} = \hP \hF_0 \ket{\nu_m} = \hLambda \ket{\nu_m} = \ket{\piinc_m}, \label{eq:Ptpi_pi}
\end{eqnarray}
which indicates $\hP \Omega_m \hP^\dagger = \hPiinc_m$.
Also, we have
\begin{eqnarray}
 \hP \Omega_? \hP^\dagger &=& \hP \left( \ident_\mHex - \sum_{m=0}^2 \Omega_m \right) \hP^\dagger \nonumber \\
 &=& \ident_\mH - \sum_{m=0}^2 \hPiinc_m = \hPiinc_?, \nonumber
\end{eqnarray}
which follows from $\hP \hP^\dagger = \ident_\mH$.
Thus, $\Omega$, as well as $\Piinc$, is an OIM.
Moreover, $\{ \hP \ket{\omega_m^\s} \}$ is cyclic for each $s \in \mI_2$;
indeed, from Eq.~(\ref{eq:Ptpi_pi}),
\begin{eqnarray}
 \hP \ket{\omega_m^{(0)}} = \ket{\piinc_m} = \hV^m \ket{\piinc_0} = \hV^m \hP \ket{\omega_0^{(0)}}, \nonumber
\end{eqnarray}
and since $\hV$ commutes with $\hPiinc_?^{1/2}$,
\begin{eqnarray}
 \hP \ket{\omega_m^{(1)}} = \hPiinc_?^{1/2} \ket{\pime_m} = \hV^m \hPiinc_?^{1/2} \ket{\pime_0} = \hV^m \hP \ket{\omega_0^{(1)}}. \nonumber
\end{eqnarray}
This completes the proof.
\QED
\end{proof}

The proposed protocol for realizing an OIM is summarized as follows:
Alice and Bob first transform a given state $\ket{\psi_m}$ into $\ket{\psi'_m}$
by the preprocessing $\mLpure$.
They next perform measurements in the ONBs $\{ \ket{a_m^\s} \}$ and $\{ \ket{b_m^\s} \}$,
and send their outcomes to Charlie.
The average failure probability can be controlled by properly choosing
the ONB $\{ \ket{\omega_m^\s} \}$.
In this discussion,
we consider the preprocessing $\mL$ that transforms $\ket{\psi_m}$ into
a generally entangled state.
We can also show that a bipartite secure measurement that realizes an OIM exists
such that $\hrho'_m$ is always separable (see Appendix~\ref{subsubsec:separable}).


\section{Multipartite case}

We consider extending this scenario to the multipartite case.
The more observers there are, the more difficult it can be for dishonest observer
to steal the information about $m$.
Thus, it may be desirable to increase the number of observers
if the information must be kept highly confidential.
In the multipartite scenario, there are $N \ge 3$ observers and one receiver.
Let us consider the following protocol:
$N$ observers first share a given state by preprocessing.
After that, they independently perform measurements
and send their results to the receiver.
We refer to the measurement as an $N$-partite secure measurement if any combined state of $N-1$ or fewer observers
has absolutely no information about which state was given.
Here, we describe the case of $N = 3$ for three linearly independent cyclic pure states.
As a preprocessing step, Alice, Bob, and Charlie transform $\hrho_m$ into
$\hrho'_m = \mL(\hrho_m)$ with the CPTP map
$\mL_N(X) = A_N X A_N^\dagger$, where
\begin{eqnarray}
 A_N &=& \sum_{s=0}^1 \sum_{m=0}^2 \ket{\teta_m^\s} \bra{\omega_m^\s}, \nonumber \\
 \ket{\teta_m^\s} &=& \frac{1}{6} \sum_{q_1, q_2=0}^1 \sum_{k_1, k_2=0}^2
  \ket{a_{k_1}^{(q_1)}} \ket{b_{k_2}^{(q_2)}} \ket{c_{m \ominus k_1 \ominus k_2}^{(q_1 \boplus q_2 \boplus s)}}.
  \label{eq:eta_ms3} \nonumber
\end{eqnarray}
$\{ \ket{c_m^\s} : m \in \mI_3, s \in \mI_2 \}$ is an ONB in Charlie's system.
They independently perform measurements in the ONBs
$\{ \ket{a_m^\s} \}$, $\{ \ket{b_m^\s} \}$, and $\{ \ket{c_m^\s} \}$,
and send their outcomes (denoted by $\ket{a_j^\q}$, $\ket{b_k^\s}$, and $\ket{c_l^{(r)}}$) to Dave.
Dave records his result as $j \oplus k \oplus l$ if $q \boplus s \boplus r = 0$ and ``failure'' otherwise.
In a similar way to the bipartite case, we can see that using this procedure,
any OIM can be realized with a tripartite secure measurement.
We can show that if possible candidate states are AGU,
then any OIM can be realized with an $N$-partite secure measurement
for any $N \ge 3$ (proof in Appendix~\ref{subsec:multipartite}).

\section{Conclusion}

In summary, we have proposed a quantum measurement scheme, called an
$N$-partite secure measurement, that enables confidential communication of
classical messages via quantum channel.
In our bipartite protocol, Alice and Bob first share a quantum state obtained with preprocessing.
They next independently perform the measurements,
in which neither Alice nor Bob gets any information about which state was given
even if one does anything wrong,
and send their results to Charlie.
We stated that an OIM for any AGU state set can be realized with an $N$-partite secure measurement.

\begin{acknowledgments}
 We thank O. Hirota of Tamagawa University for the useful discussions.
 T. S. U. was supported (in part) by JSPS KAKENHI (Grant No. 24360151).
\end{acknowledgments}

\appendix

\section{Extending to Abelian geometrically uniform (AGU) states} \label{append:AGU}

In this section, we prove that any OIM for AGU states
can be realized with an $N$-partite secure measurement for any $N \ge 2$.

\subsection{Outline}

First, in Subsec.~\ref{subsec:preliminaries}, we provide definitions and a lemma.
Next, in Subsec.~\ref{subsec:bipartite}, we consider a bipartite secure measurement.
We show that any OIM for AGU states
can be realized with a bipartite secure measurement (Theorem~\ref{thm:GU_inc}).
We also show that there exists such a bipartite secure measurement
in which a given state is transformed into a separable state in a preprocessing step.
Finally, in Subsec.~\ref{subsec:multipartite}, we extend Theorem~\ref{thm:GU_inc}
to the multipartite case (Theorem~\ref{thm:GU_inc_multipartite}).

\subsection{Preliminaries} \label{subsec:preliminaries}

We consider a set of $M$ quantum states $\Psi = \{ \hrho_m : m \in \mG \}$,
where $\mG$ is an Abelian group with $M$ elements.
Let $e$ be the identity element of $\mG$.
Assume that there exists a set of $M$ operators, $\{ \hU_m : m \in \mG \}$, in which
$\hU_m$ is a unitary or anti-unitary operator on $\mH$ satisfying
\begin{eqnarray}
 \hU_e &=& \ident_\mH, \nonumber \\
 \hU_m \hU_k &=& \hU_{m \circ k}, ~~~ m, k \in \mG, \label{eq:U}
\end{eqnarray}
and
\begin{eqnarray}
 \hU_m \hrho_k \hU_m^\dagger &=& \hrho_{m \circ k}, ~~~ m,k \in \mG,
\end{eqnarray}
where $m \c k$ is the multiplication of $m$ and $k$,
and $\hU_m^\dagger$ is the operator satisfying $\hU_m^\dagger \hU_m = \hU_m \hU_m^\dagger = \ident_\mH$.
Note that if $\hU_m$ is an anti-unitary operator, then so is $\hU_m^\dagger$.
Such a state set with equal prior probabilities is called an AGU state set
\cite{Eld-For-2001,Eld-Meg-Ver-2004,Nak-Usu-2013-group}.
In particular, if $\mG$ is cyclic,
then $\Psi$ is referred to as a cyclic state set.

A quantum measurement for $\Psi$ that may return an inconclusive answer
can be described by a POVM $\Pi = \{ \hPi_m : m \in \mG_? \}$,
where $\mG_?$ is the set (which is not a group) formed by adding element $`?'$ to $\mG$.
The detection operator $\hPi_m$ with $m \in \mG$ corresponds to identification of
the state $\hrho_m$, while $\hPi_?$ corresponds to the inconclusive answer.

For a given set of unitary or anti-unitary operators $\{ \hU_m : m \in \mG \}$ on $\mH$
satisfying Eq.~(\ref{eq:U}),
we say a set of vectors $\{ \ket{a_m} : m \in \mG \}$ of $\mH$ is AGU if
\begin{eqnarray}
 \ket{a_{m \c k}} &=& \hU_m \ket{a_k}, ~~~ m,k \in \mG.
\end{eqnarray}
Let $R = \rank~\hrho_m$, which is independent of $m$ when $\Psi$ is AGU.
It follows that there exist vectors $\{ \ket{\psi_{m,r}} : m \in \mG, r \in \mI_R \}$ such that
\begin{eqnarray}
 \hrho_m &=& \sum_{r \in \mI_R} \ket{\psi_{m,r}} \bra{\hpsi_{m,r}}, ~~~ m \in \mG, \nonumber \\
 \ket{\psi_{m \c k,r}} &=& \hU_m \ket{\psi_{k,r}}, ~~~ m,k \in \mG, r \in \mI_R, \label{eq:Psi}
\end{eqnarray}
i.e., $\{ \ket{\psi_{m,r}} : m \in \mG \}$ is AGU for any $r \in \mI_R$.
Indeed, if we choose $\{ \ket{\psi_{e,r}} : r \in \mI_R \}$ such that
$\hrho_e = \sum_{r \in \mI_R} \ket{\psi_{e,r}} \bra{\hpsi_{e,r}}$
and let $\ket{\psi_{m,r}} = \hU_m \ket{\psi_{e,r}}$, then Eq.~(\ref{eq:Psi}) holds.

The following lemma shows that any OIM for AGU states
can be expressed as a projection valued measure (PVM) with certain symmetries
(proof in Appendix~\ref{append:lemma_inc}).
\begin{lemma} \label{lemma:inc}
 We consider a PVM $\Omega = \{ \hOmega_m : m \in \mG_? \}$ on
 a $2MR$-dimensional Hilbert space (denoted by $\mHex$) satisfying $\mHex \supseteq \mH$.
 Assume that $\hOmega_m$ is expressed as
 \begin{eqnarray}
  \hOmega_m &=& \sum_{r \in \mI_R} \ket{\omega_{m,r}^{(0)}}\bra{\omega_{m,r}^{(0)}}, ~~~ m \in \mG, \nonumber \\
  \hOmega_? &=& \sum_{m \in \mG} \sum_{r \in \mI_R} \ket{\omega_{m,r}^{(1)}}\bra{\omega_{m,r}^{(1)}}, \label{eq:inc_tPi}
 \end{eqnarray}
 where $\{ \ket{\omega_{m,r}^\s} : s \in \mI_2, m \in \mG, r \in \mI_R \}$ is
 an ONB in $\mHex$.
 For any $p$ with $0 \le p \le 1$, there exists an ONB $\{ \ket{\omega_{m,r}^\s} \}$
 such that
 \begin{eqnarray}
  \hP \ket{\omega_{m \c k,r}^\s} &=& \hU_m \hP \ket{\omega_{k,r}^\s},
   ~~ k,m \in \mG, s \in \mI_2, r \in \mI_R \nonumber \\ \label{eq:inc_Ppi_Upi}
 \end{eqnarray}
 holds (i.e., $\{ \hP \ket{\omega_{m,r}^\s} : m \in \mG \}$ is AGU for any $s \in \mI_2$ and $r \in \mI_R$)
 and $\Omega$ is an OIM,
 with the average failure probability of $p$, for $\Psi$.
\end{lemma}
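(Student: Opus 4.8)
The plan is to generalize the dilation used in the proof of Theorem~\ref{thm:inc} from three cyclic pure states to a rank-$R$ AGU set, carrying the extra index $r \in \mI_R$ and replacing the cyclic generator by the family of (anti)unitaries $\{\hU_m\}$. First I would fix an OIM $\hPi = \{\hPi_m : m \in \mG_?\}$ for $\Psi$ with average failure probability exactly $p$ that inherits the AGU symmetry: its conclusive operators are $\hPi_m = \sum_{r \in \mI_R}\ket{\pi_{m,r}}\bra{\pi_{m,r}}$ with $\ket{\pi_{m\c k,r}} = \hU_m\ket{\pi_{k,r}}$, and $\hPi_?$ is invariant, i.e.\ $\hU_m\hPi_?\hU_m^\dagger = \hPi_?$. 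The existence of such a symmetric OIM for each prescribed $p$ (obtained from any optimal OIM by group-averaging) is the structural input I would borrow from the theory of OIMs for AGU state sets \cite{Eld-2003-inc,Nak-Usu-2013-group}; it is the analog of the facts, used in Sec.~\ref{sec:OIM}, that $\hPiinc_m$ is rank one and $\hPiinc$ may be taken cyclic.

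With this $\hPi$ in hand, set $\hLambda = (\ident_\mH - \hPi_?)^{1/2}$, which commutes with every $\hU_m$ because $\hPi_?$ is invariant. I would then produce two auxiliary orthonormal bases of $\mH$: an ONB $\{\ket{\nu_{m,r}}\}$ with $\hLambda\ket{\nu_{m,r}} = \ket{\pi_{m,r}}$ (this exists by \cite{remark_Omega}, since $\sum_{m,r}\ket{\pi_{m,r}}\bra{\pi_{m,r}} = \hLambda^2$ and each $\ket{\pi_{m,r}} \in \Image\hLambda$), and an AGU ONB $\{\ket{\chi_{m,r}}\}$ with $\ket{\chi_{m\c k,r}} = \hU_m\ket{\chi_{k,r}}$. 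Exactly as in Eq.~(\ref{eq:inc_phi}), I would then diagonalize $\hPi_? = \sum_{m,r}\lambda_{m,r}\ket{\phi_{m,r}}\bra{\phi_{m,r}}$ and lift each $\ket{\phi_{m,r}}$ to an orthonormal pair in $\mHex$ with $\hP\ket{\phi_{m,r}^{(0)}} = \sqrt{1-\lambda_{m,r}}\ket{\phi_{m,r}}$ and $\hP\ket{\phi_{m,r}^{(1)}} = \sqrt{\lambda_{m,r}}\ket{\phi_{m,r}}$; the $2MR$ vectors $\{\ket{\phi_{m,r}^\s}\}$ form an ONB of $\mHex$. Defining the isometries $\hF_s = \sum_{m,r}\ket{\phi_{m,r}^\s}\bra{\phi_{m,r}}$ onto the two orthogonal halves of $\mHex$, one checks $\hP\hF_0 = \hLambda$ and $\hP\hF_1 = \hPi_?^{1/2}$.

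The ONB I am after is then $\ket{\omega_{m,r}^{(0)}} = \hF_0\ket{\nu_{m,r}}$ and $\ket{\omega_{m,r}^{(1)}} = \hF_1\ket{\chi_{m,r}}$: since $\hF_0,\hF_1$ are isometries with mutually orthogonal ranges and $\{\ket{\nu_{m,r}}\},\{\ket{\chi_{m,r}}\}$ are each ONBs of $\mH$, the vectors $\{\ket{\omega_{m,r}^\s}\}$ form an ONB of $\mHex$. For the OIM property I would compute, as in Eq.~(\ref{eq:Ptpi_pi}), $\hP\ket{\omega_{m,r}^{(0)}} = \hLambda\ket{\nu_{m,r}} = \ket{\pi_{m,r}}$, giving $\hP\hOmega_m\hP^\dagger = \hPi_m$, and then $\hP\hOmega_?\hP^\dagger = \hP(\ident_\mHex - \sum_m\hOmega_m)\hP^\dagger = \ident_\mH - \sum_m\hPi_m = \hPi_?$ from $\hP\hP^\dagger = \ident_\mH$; hence $\hOmega$ reproduces the statistics of $\hPi$ on $\mH$ and is an OIM with failure probability $p$. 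The required AGU relation (\ref{eq:inc_Ppi_Upi}) then splits into two cases: for $s=0$ it reads $\hP\ket{\omega_{m\c k,r}^{(0)}} = \ket{\pi_{m\c k,r}} = \hU_m\ket{\pi_{k,r}} = \hU_m\hP\ket{\omega_{k,r}^{(0)}}$, i.e.\ the AGU symmetry of the OIM; for $s=1$ it follows from $\hP\ket{\omega_{m,r}^{(1)}} = \hPi_?^{1/2}\ket{\chi_{m,r}}$ together with the commutation of $\hPi_?^{1/2}$ with $\hU_m$ and the AGU property of $\{\ket{\chi_{m,r}}\}$.

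The main obstacle is not the dilation itself, which is a bookkeeping extension of Theorem~\ref{thm:inc} once the symmetric OIM is available, but rather securing its two symmetric inputs: that for every target $p$ there is an OIM whose inconclusive operator $\hPi_?$ is genuinely $\{\hU_m\}$-invariant and whose conclusive vectors form AGU families, and that $\mH$ admits an AGU ONB $\{\ket{\chi_{m,r}}\}$ (which is where the linear-independence hypothesis underlying $\dim\mH = MR$ enters). I expect the anti-unitary members of $\{\hU_m\}$ to be the delicate point, since the commutation claims for $\hLambda$ and $\hPi_?^{1/2}$ and the orthonormality of AGU orbits must be argued through anti-linear conjugation rather than ordinary similarity; I would handle these with the functional-calculus identity $\hU_m f(\hPi_?)\hU_m^\dagger = f(\hU_m\hPi_?\hU_m^\dagger) = f(\hPi_?)$, which remains valid for real $f$ even when $\hU_m$ is anti-unitary.
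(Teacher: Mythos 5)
Your construction has a genuine gap: it only makes sense when $\dim\mH = MR$, but the lemma carries no such hypothesis. All three of your auxiliary objects — the eigendecomposition $\hPi_? = \sum_{m,r}\lambda_{m,r}\ket{\phi_{m,r}}\bra{\phi_{m,r}}$ with $MR$ terms, the ONB $\{\ket{\nu_{m,r}}\}$ of $\mH$ satisfying $\hLambda\ket{\nu_{m,r}} = \ket{\pi_{m,r}}$, and the AGU ONB $\{\ket{\chi_{m,r}}\}$ of $\mH$ — consist of $MR$ orthonormal vectors living \emph{inside} $\mH$, which is impossible whenever $D = \dim\mH < MR$. For a general AGU state set (mixed states, or pure states that are not linearly independent) one only has $D \le MR$, and the whole point of the appendix is to drop the linear-independence and purity assumptions of Theorems~\ref{thm:me} and \ref{thm:inc}; your remark that the existence of $\{\ket{\chi_{m,r}}\}$ is ``where the linear-independence hypothesis enters'' imports a hypothesis the statement does not grant you. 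Moreover, even in the case $D = MR$ you leave the existence of an AGU ONB of $\mH$ as an unproven ``obstacle,'' whereas it is precisely the nontrivial input that must be supplied.

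The paper closes exactly this gap by interposing an $MR$-dimensional space $\tmH$ with $\mH \subseteq \tmH \subseteq \mHex$ and doing all the bookkeeping there rather than in $\mH$: the spectrum of $\hLambda = (\ident_\mH - \hPi_?)^{1/2}$ is padded with eigenvalue-one directions ($\lambda_d = 1$ for $D \le d < MR$) so that $\{\ket{\phi_d} : d \in \mI_{MR}\}$ is an ONB of $\tmH$, and the two auxiliary ONBs are obtained by Naimark dilation \emph{into} $\tmH$, not inside $\mH$. Concretely, for $s=0$ one dilates the rank-one POVM $\{\hLambda^+\ket{\pi_{m,r}}\bra{\pi_{m,r}}\hLambda^+\}$ on $\supp\hLambda$ to an ONB $\{\ket{v_{m,r}^{(0)}}\}$ of $\tmH$ with $\hLambda\ket{v_{m,r}^{(0)}} = \ket{\pi_{m,r}}$; for $s=1$, instead of an AGU ONB of $\mH$, one takes a symmetric rank-one decomposition $\hPime_m = \sum_r \ket{\pime_{m,r}}\bra{\pime_{m,r}}$ of a minimum-error measurement (which is AGU but generally \emph{not} orthonormal) and dilates it to an ONB $\{\ket{v_{m,r}^{(1)}}\}$ of $\tmH$ with $\hP\ket{v_{m,r}^{(1)}} = \ket{\pime_{m,r}}$ — only the projections onto $\mH$ are AGU, which is all that Eq.~(\ref{eq:inc_Ppi_Upi}) requires. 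With these substitutions your remaining steps (the lifted basis $\{\ket{\phi_d^\s}\}$, the isometries $\hF_s$, the identities $\hP\hF_0 = \hLambda$ and $\hP\hF_1 = \hPi_?^{1/2}$, and the two-case verification of the symmetry) go through verbatim; your observation about handling anti-unitaries via real functional calculus is correct and is indeed what the commutation of $\hPi_?^{1/2}$ with $\hU_m$ rests on.
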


\subsection{Bipartite secure measurement for AGU states} \label{subsec:bipartite}

\subsubsection{Realization of an OIM}

In Theorem~\ref{thm:inc}, we showed that any OIM for three linearly independent cyclic pure states
can be realized with a bipartite secure measurement.
Here, we extend this result as follows:
\begin{thm} \label{thm:GU_inc}
 An OIM, with any average failure probability, for AGU states
 can be realized with a bipartite secure measurement.
\end{thm}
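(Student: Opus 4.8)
The plan is to mirror the proof of Theorem~\ref{thm:inc}, replacing its ad hoc six-dimensional dilation by Lemma~\ref{lemma:inc} and replacing the cyclic shift by the group operation of $\mG$. First I would invoke Lemma~\ref{lemma:inc} to obtain, for the prescribed failure probability $p$, a projective OIM $\Omega = \{ \hOmega_m : m \in \mG_? \}$ on the $2MR$-dimensional space $\mHex$ whose conclusive vectors are AGU, i.e.\ $\hP \ket{\omega_{m \c k,r}^\s} = \hU_m \hP \ket{\omega_{k,r}^\s}$. This is the AGU analogue of the projective cyclic $\Omega$ built by hand in Theorem~\ref{thm:inc}, and it is exactly what lets the symmetry argument go through.

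Next I would define the preprocessing $\mLpure(X) = A X A^\dagger$ through the isometry $A = \sum_{s,m,r} \ket{\eta_{m,r}^\s} \bra{\omega_{m,r}^\s}$, with the AGU analogue of Eq.~(\ref{eq:eta_ms}),
\[ \ket{\eta_{m,r}^\s} = \frac{1}{\sqrt{2M}} \sum_{q \in \mI_2} \sum_{k \in \mG} \ket{a_{k,r}^\q} \ket{b_{m \c \inv k,r}^\qs}, \]
where $\{ \ket{a_{k,r}^\s} \}$ and $\{ \ket{b_{k,r}^\s} \}$ are ONBs of Alice's and Bob's $2MR$-dimensional systems. A short orthonormality check shows $A$ is an isometry, so $\mLpure$ is CPTP. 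Alice and Bob then measure in these ONBs, and Charlie, on receiving Alice's label $(j,q_A)$ and Bob's label $(k,q_B)$, records $j \c k$ when $q_A = q_B$ and ``failure'' otherwise, giving a POVM $\Phi = \{ \Phi_n : n \in \mG_? \}$. The realization step is then purely algebraic and needs no symmetry: since the $\ket{\eta_{m,r}^\s}$ are orthonormal, one verifies directly that $A^\dagger \Phi_n A = \hOmega_n$ for every $n \in \mG_?$, whence $\Tr(\hrho'_m \Phi_n) = \Tr(\hrho_m \hOmega_n)$ and the protocol reproduces the OIM $\Omega$ with failure probability $p$.

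The remaining, and genuinely delicate, step is the security condition Eq.~(\ref{eq:TrBC}): $\Tr_A \hrho'_m$ and $\Tr_B \hrho'_m$ must be independent of $m$. Writing $\hrho'_m = \sum_t A \ket{\psi_{m,t}} \bra{\psi_{m,t}} A^\dagger$ and tracing out Alice, the $m$-dependence collects into the coefficients $D^{s,s'}_{\gamma,r}(m) = \sum_{k \in \mG} \bra{\omega_{k,r}^\s} \hrho_m \ket{\omega_{\gamma \c k,r}^{s'}}$, while the surviving $\ket{b}\bra{b}$ block is manifestly $m$-independent. Using $\hrho_{m \c g} = \hU_g \hrho_m \hU_g^\dagger$ together with $\hU_g^\dagger \hP \ket{\omega_{k,r}^\s} = \hP \ket{\omega_{\inv g \c k,r}^\s}$ and the substitution $k \mapsto g \c k$, I would show $D^{s,s'}_{\gamma,r}(m \c g) = D^{s,s'}_{\gamma,r}(m)$ for every $g$ with $\hU_g$ unitary, so the reduced state is constant on cosets of the unitary subgroup.

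The hard part is precisely the anti-unitary elements, which an AGU set may genuinely require. For anti-unitary $\hU_g$ the same manipulation yields $D^{s,s'}_{\gamma,r}(m \c g) = \overline{D^{s,s'}_{\gamma,r}(m)}$; that is, across the (index-$\le 2$) anti-unitary coset the reduced state picks up an entrywise complex conjugation in the $\{ \ket{b_{k,r}^\s} \}$ basis rather than returning to itself. To close the argument I would exploit the freedom left by Lemma~\ref{lemma:inc} in choosing the seed vectors $\ket{\omega_{e,r}^\s}$ to arrange that $\Tr_A \hrho'_e$ is real in that basis, so conjugation acts trivially and the two cosets agree; the Abelian reindexing above then delivers full $m$-independence, and the identical computation handles $\Tr_B$. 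Finally I would note that a minor variant of the $\ket{\eta_{m,r}^\s}$ construction keeps each $\hrho'_m$ separable, yielding the separable refinement promised in Appendix~\ref{subsubsec:separable}.
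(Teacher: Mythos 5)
Your proposal is, in substance, the paper's own proof of Theorem~\ref{thm:GU_inc}: the same appeal to Lemma~\ref{lemma:inc} for a projective, AGU-symmetric OIM $\Omega$ on $\mHex$; the same preprocessing isometry $\hA = \sum \ket{\eta_{k,r}^\q}\bra{\omega_{k,r}^\q}$ (your label $m \c \inv{k}$ equals the paper's $\inv{k} \c m$ because $\mG$ is Abelian); the same local ONB measurements with the receiver recording $j \c k$ when the sector labels match; and the same reduction of security to $m$-independence of the reduced states. Your identity $\hA^\dagger \hPi'_n \hA = \hOmega_n$ is correct and is a compact repackaging of the paper's computation culminating in Eq.~(\ref{eq:Trrho'Pi'M}).

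Where you deviate --- the anti-unitary elements --- is exactly where your proof has a gap, although it is a gap the paper's own argument shares and, unlike you, never acknowledges. Your observation is right: for anti-unitary $\hU_m$ one has $\braket{\hU_m x | \hU_m y} = \braket{x | y}^*$, so the step corresponding to Eq.~(\ref{eq:inc_pi_hpsi}) yields $\braket{\omega_{k,r}^\s | \psi_{m,r'}} = \sqrt{2M}\,(\chi_{\inv{m} \c k,r,r'}^\s)^*$, and hence $\Tr_A~\hrho'_m$ is the entrywise conjugate of $\Tr_A~\hrho'_e$ in Bob's basis rather than $\Tr_A~\hrho'_e$ itself; the paper's insertion of $\hU_m^\dagger \hU_m = \ident_\mH$ inside the bracket is valid verbatim only for unitary $\hU_m$. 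But your proposed repair is an assertion, not an argument: making $\Tr_A~\hrho'_e$ entrywise real amounts to demanding that (combinations of) the quantities $\sum_{j \in \mG} \braket{\omega_{j,r}^\q | \hrho_e | \omega_{j \c \gamma,r}^{(q')}}$ all be real, a large family of real-linear constraints (and the analogous family for $\Tr_B$), and nothing shows that the residual freedom in Lemma~\ref{lemma:inc} --- essentially choices of rank-one decompositions and Naimark dilations, which are already constrained to reproduce the OIM and the AGU property (\ref{eq:inc_Ppi_Upi}) --- can enforce them. So your proof, like the paper's, is complete whenever every $\hU_m$ is unitary (which covers cyclic sets and all the examples the paper cites), but the genuinely anti-unitary case is left open in your write-up; closing it requires an actual construction, not an appeal to unspecified freedom.
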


\begin{proof}
Let us consider an AGU state set $\Psi = \{ \hrho_m : m \in \mG \}$.
Also, let $\Omega$ and $\{ \ket{\omega_{m,r}^\s} \}$ be the OIM
for $\Psi$ and the ONB obtained by Lemma~\ref{lemma:inc}, respectively.
Consider that two observers, Alice and Bob, perform the preprocessing represented by
the completely positive trace-preserving (CPTP) map $\mL(X) = \hA X \hA^\dagger$,
where
\begin{eqnarray}
 \hA &=& \sum_{q \in \mI_2} \sum_{k \in \mG} \sum_{r \in \mI_R}
  \ket{\eta_{k,r}^\q} \bra{\omega_{k,r}^\q}, \nonumber \\
 \ket{\eta_{k,r}^\q} &=& \frac{1}{\sqrt{2M}} \sum_{s \in \mI_2} \sum_{t \in \mG}
  \ket{a_{t,r}^\s} \ket{b_{\inv{t} \c k,r}^\qs}, \label{eq:inc_E}
\end{eqnarray}
$\inv{m}$ is the inverse element of $m$, and $\boplus$ is the addition modulo $2$.
$\{ \ket{a_{m,r}^\s} : s \in \mI_2, m \in \mG, r \in \mI_R \}$ and
$\{ \ket{b_{m,r}^\s} : s \in \mI_2, m \in \mG, r \in \mI_R \}$ are, respectively,
ONBs in Alice's and Bob's systems, each of which is $2MR$-dimensional.
Let $\ket{\psi'_{m,r}} = \hA\ket{\psi_{m,r}}$.
This map transforms $\ket{\psi_{m,r}}$ into $\ket{\psi'_{m,r}}$,
and thus $\hrho'_m = \mL(\hrho_m)$ is expressed by
\begin{eqnarray}
 \hrho'_m &=& \sum_{r \in \mI_R} \ket{\psi'_{m,r}} \bra{\hpsi'_{m,r}}.
\end{eqnarray}
We show that a bipartite secure measurement can be realized with the preprocessing $\mL$
and that an OIM can be realized with
measurements, independently performed by Alice and Bob, for the state $\hrho'_m$.

First, we show that a bipartite secure measurement can be realized.
A necessary and sufficient condition for a bipartite secure measurement is that
for any $j, k \in \mG$, $\{ \hrho'_m : m \in \mG \}$ satisfies
\begin{eqnarray}
 \Tr_A~\hrho'_j &=& \Tr_A~\hrho'_k, ~~~ \Tr_B~\hrho'_j = \Tr_B~\hrho'_k, \label{eq:TrAB}
\end{eqnarray}
where $\Tr_A$ and $\Tr_B$ are the partial traces over Alice's and Bob's systems, respectively.
Thus, it suffices to show Eq.~(\ref{eq:TrAB}).
Let $\chi_{k,r,r'}^\s = \braket{\omega_{k,r}^\s | \psi_{e,r'}} / \sqrt{2M}$;
then, we have
\begin{eqnarray}
\hspace{-1em}
 \braket{\omega_{k,r}^\s | \hpsi_{m,r'}} &=&
 \braket{\omega_{\inv{m} \c k,r}^\s | \hP^\dagger \hU_m^\dagger \hU_m | \hpsi_{e,r'}}
 \nonumber \\
 &=& \braket{\omega_{\inv{m} \c k,r}^\s | \hpsi_{e,r'}}
  = \sqrt{2M} \chi_{\inv{m} \c k,r,r'}^\s, \label{eq:inc_pi_hpsi}
\end{eqnarray}
where the first line follows from Eqs.~(\ref{eq:Psi}) and (\ref{eq:inc_Ppi_Upi})
(i.e., both $\{ \ket{\psi_{m,r}} : m \in \mG \}$ and $\{ \hP \ket{\omega_{m,r}^\s} : m \in \mG \}$ are AGU).
Thus, from Eq.~(\ref{eq:inc_E}), we obtain
\begin{eqnarray}
 \ket{\psi'_{m,r'}} &=& \hA \ket{\psi_{m,r'}}
  = \sqrt{2M} \sum_{q \in \mI_2} \sum_{k \in \mG} \sum_{r \in \mI_R}
  \chi_{\inv{m} \c k,r,r'}^\q \ket{\eta_{k,r}^\q}
  \nonumber \\
 &=& \sqrt{2M} \sum_{q \in \mI_2} \sum_{k' \in \mG} \sum_{r \in \mI_R}
  \chi_{k',r,r'}^\q \ket{\eta_{m \c k',r}^\q},
  \label{eq:inc_ket_xm}
\end{eqnarray}
where $k' = \inv{m} \c k$.
In contrast, from Eq.~(\ref{eq:inc_E}), we have
\begin{eqnarray}
 \lefteqn{ \Tr_A~\ket{\eta_{m \c k',r}^\q} \bra{\eta_{m \c l',r}^\u} } \nonumber \\
 &=&
  \frac{1}{2M} \Tr_A \left[ \sum_{s \in \mI_2} \sum_{t \in \mG}
  \ket{a_{m \c t,r}^\s} \ket{b_{\inv{t} \c k',r}^\qs} \bra{a_{m \c t,r}^\s} \bra{b_{\inv{t} \c l',r}^\us}
  \right] \nonumber \\
 &=&
  \frac{1}{2M} \sum_{s \in \mI_2} \sum_{t \in \mG}
  \ket{b_{\inv{t} \c k',r}^\qs} \bra{b_{\inv{t} \c l',r}^\us}, \label{eq:inc_ket_xm2}
\end{eqnarray}
which means that $\Tr_A~\ket{\eta_{m \c k',r}^\q} \bra{\eta_{m \c l',r}^\u}$
is independent of $m$.
Therefore, from Eq.~(\ref{eq:inc_ket_xm}),
$\Tr_A~\ket{\psi'_{m,r'}}\bra{\psi'_{m,r'}}$ is independent of $m$,
and thus so is $\Tr_A~\hrho'_m = \Tr_A \sum_{r' \in \mI_R} \ket{\psi'_{m,r'}}\bra{\psi'_{m,r'}}$.
Similarly, from
\begin{eqnarray}
 \lefteqn{ \Tr_B~\ket{\eta_{m \c k',r}^\q} \bra{\eta_{m \c l',r}^\u} } \nonumber \\
 &=&
  \frac{1}{2M} \Tr_B \left[ \sum_{s \in \mI_2} \sum_{t \in \mG}
  \ket{a_{k' \c t,r}^\qs} \ket{b_{\inv{t} \c m,r}^\s} \bra{a_{l' \c t,r}^\us} \bra{b_{\inv{t} \c m,r}^\s}
  \right] \nonumber \\
 &=&
  \frac{1}{2M} \sum_{s \in \mI_2} \sum_{t \in \mG}
  \ket{a_{k' \c t,r}^\qs} \bra{a_{l' \c t,r}^\us}, \label{eq:inc_ket_xm2_TrA}
\end{eqnarray}
we can easily derive that $\Tr_B~\hrho'_m$ is also independent of $m$.
Therefore, Eq.~(\ref{eq:TrAB}) holds.

Next, we show that an OIM can be realized with
only the local operations to the state $\hrho'_m$.
We consider the following procedure:
Alice and Bob independently perform measurements
in the ONBs $\{ \ket{a_{m,r}^\s} \}$ and $\{ \ket{b_{m,r}^\s} \}$, respectively,
and send their outcomes, denoted by $\ket{a_{t,r}^\s}$ and $\ket{b_{l,r'}^\q}$, to the receiver, Charlie.
Note that from Eq.~(\ref{eq:inc_E}), $r = r'$ always holds.
Charlie records his result as $t \c l$, which corresponds to $\ket{\psi_{t \c l}}$, if $s = q$
and ``failure'' otherwise.
It follows that this procedure can be represented by the POVM $\Pi' = \{ \hPi'_m : m \in \mG_? \}$ with
\begin{eqnarray}
\hspace{-2em}
 \hPi'_k &=& \sum_{s \in \mI_2} \sum_{t \in \mG} \sum_{r \in \mI_R} \ket{a_{t,r}^\s} \bra{a_{t,r}^\s} \otimes
  \ket{b_{\inv{t} \c k,r}^\s} \bra{b_{\inv{t} \c k,r}^\s} \label{eq:inc_Pi'k}
\end{eqnarray}
for each $k \in \mG$ and
\begin{eqnarray}
\hspace{-2em}
 \hPi'_? &=& \sum_{s \in \mI_2} \sum_{t,k \in \mG} \sum_{r \in \mI_R} \ket{a_{t,r}^\s} \bra{a_{t,r}^\s}
  \otimes \ket{b_{\inv{t} \c k,r}^\sm} \bra{b_{\inv{t} \c k,r}^\sm}. \label{eq:inc_Pi'M}
\end{eqnarray}
From Eqs.~(\ref{eq:inc_E}) and (\ref{eq:inc_Pi'k}),
for any $q, s \in \mI_2$, $k, t, l \in \mG$, and $r' \in \mI_R$, we have
\begin{eqnarray}
 \braket{\eta_{t,r}^\q | \hPi'_k | \eta_{l,r'}^\s} &=&
  \delta_{q,0} \delta_{s,0} \delta_{t,k} \delta_{l,k} \delta_{r,r'},
\end{eqnarray}
where $\delta_{a,b}$ is the Kronecker delta.
Thus, from Eq.~(\ref{eq:inc_ket_xm}),
for any $k, m \in \mG$ and $r' \in \mI_R$, we have
\begin{eqnarray}
 \braket{\psi'_{m,r'} | \hPi'_k | \psi'_{m,r'}} &=&
  2M \sum_{r \in \mI_R} |\chi_{\inv{m} \c k,r,r'}^{(0)}|^2 \nonumber \\
 &=& \braket{\psi_{m,r'} | \hOmega_k | \psi_{m,r'}},
  \label{eq:inc_psim_Pik}
\end{eqnarray}
where the second line follows from Eqs.~(\ref{eq:inc_tPi}) and (\ref{eq:inc_pi_hpsi}).
This gives
\begin{eqnarray}
 \Tr(\hrho'_m \hPi'_k) &=& \sum_{r' \in \mI_R} \braket{\psi'_{m,r'} | \hPi'_k | \psi'_{m,r'}} \nonumber \\
 &=& \sum_{r' \in \mI_R} \braket{\psi_{m,r'} | \hOmega_k | \psi_{m,r'}} = \Tr(\hrho_m \hOmega_k), \nonumber \\
 \Tr(\hrho'_m \hPi'_?) &=& 1 - \sum_{k \in \mG} \Tr(\hrho'_m \hPi'_k) \nonumber \\
 &=& 1 - \sum_{k \in \mG} \Tr(\hrho_m \hOmega_k) = \Tr(\hrho_m \hOmega_?), \label{eq:Trrho'Pi'M}
\end{eqnarray}
which indicates that the average correct and failure probabilities of $\Pi'$ for $\Psi'$
are identical to those of $\Omega$ for $\Psi$, respectively.
Therefore, this procedure realizes an OIM.
\QED
\end{proof}

\subsubsection{Preprocessing for returning a separable state} \label{subsubsec:separable}

Here, we show that an OIM can be realized with a bipartite secure measurement
even if Alice and Bob use the following CPTP map, instead of $\mL$, in the preprocessing step:
\begin{eqnarray}
 \mLsep(\hX) &=& \sum_{s \in \mI_2} \sum_{k \in \mG} \sum_{r \in \mI_R}
  \hA_{k,r}^\s \hX \hA_{k,r}^{\s \dagger}, \nonumber \\
 \hA_{k,r}^\s &=& \frac{1}{\sqrt{2M}} \ket{a_{k,r}^\s} \sum_{q \in \mI_2} \sum_{j \in \mG}
  \ket{b_{\inv{k} \c j,r}^\qs} \bra{\omega_{j,r}^\q}. \label{eq:Lsep}
\end{eqnarray}
$\mLsep$ is a CPTP map since
\begin{eqnarray}
 \lefteqn{ \sum_{s \in \mI_2} \sum_{k \in \mG} \sum_{r \in \mI_R} \hA_{k,r}^{\s \dagger} \hA_{k,r}^\s }
  \nonumber \\
 &=& \frac{1}{2M} \sum_{s,q \in \mI_2} \sum_{j,k \in \mG} \sum_{r \in \mI_R}
  \ket{\omega_{j,r}^\q} \bra{\omega_{j,r}^\q} \nonumber \\
 &=& \sum_{q \in \mI_2} \sum_{j \in \mG} \sum_{r \in \mI_R}
  \ket{\omega_{j,r}^\q} \bra{\omega_{j,r}^\q} = \ident_\mHex.
\end{eqnarray}
$\hrho'_m = \mLsep(\hrho_m)$ is a mixed state, even if $\hrho_m$ is a pure state,
but is always a separable state;
indeed, from Eq.~(\ref{eq:Lsep}), we have
\begin{eqnarray}
 \hrho'_m &=& \sum_{s \in \mI_2} \sum_{k \in \mG} \sum_{r \in \mI_R}
  \hA_{k,r}^\s \hrho_m \hA_{k,r}^{\s \dagger}, \nonumber \\
 &=& \sum_{s,q,q' \in \mI_2} \sum_{k,j,j' \in \mG} \sum_{r,t \in \mI_R}
  \chi_{\inv{m} \c j,r,t}^\q \chi_{\inv{m} \c j',r,t}^{(q') *} \nonumber \\
 & & \mbox{} \times \ket{a_{k,r}^\s} \bra{a_{k,r}^\s}
  \otimes \ket{b_{\inv{k} \c j,r}^\qs} \bra{b_{\inv{k} \c j',r}^{(q' \boplus s)}}
  \nonumber \\
 &=& \sum_{s \in \mI_2} \sum_{k,j,j' \in \mG} \sum_{r \in \mI_R} \ket{a_{k,r}^\s} \bra{a_{k,r}^\s}
  \nonumber \\
 & & \mbox{} \otimes \sum_{t \in \mI_R} \ket{\gamma_{m,k,r,t}^\s} \bra{\gamma_{m,k,r,t}^\s},
  \label{eq:sep_rho'}
\end{eqnarray}
where
\begin{eqnarray}
 \ket{\gamma_{m,k,r,t}^\s} &=& \sum_{q \in \mI_2} \sum_{j \in \mG}
  \chi_{\inv{m} \c j,r,t}^\q \ket{b_{\inv{k} \c j,r}^\qs}.
\end{eqnarray}
We show that an OIM can be realized with a bipartite secure measurement using $\mLsep$.

First, we show that a bipartite secure measurement can be realized, i.e., Eq.~(\ref{eq:TrAB}) holds.
We have
\begin{eqnarray}
 \Tr_A~\hrho'_m &=& \sum_{s,q,q' \in \mI_2} \sum_{k,j,j' \in \mG} \sum_{r,t \in \mI_R}
  \chi_{\inv{m} \c j,r,t}^\q \chi_{\inv{m} \c j',r,t}^{(q') *} \nonumber \\
 & & \mbox{} \times \ket{b_{\inv{k} \c j,r}^\qs} \bra{b_{\inv{k} \c j',r}^{(q' \boplus s)}} \nonumber \\
 &=& \sum_{s,q,q' \in \mI_2} \sum_{\kappa,\iota,\iota' \in \mG} \sum_{r,t \in \mI_R}
  \chi_{\iota,r,t}^\q \chi_{\iota',r,t}^{(q') *} \nonumber \\
 & & \mbox{} \times \ket{b_{\inv{\kappa} \c \iota,r}^\qs} \bra{b_{\inv{\kappa} \c \iota',r}^{(q' \boplus s)}},
  \nonumber \\
 \Tr_B~\hrho'_m &=& \sum_{s,q \in \mI_2} \sum_{k,j \in \mG} \sum_{r,t \in \mI_R}
  |\chi_{\inv{m} \c j,r,t}^\q|^2 \ket{a_{k,r}^\s} \bra{a_{k,r}^\s} \nonumber \\
 &=& \sum_{s,q \in \mI_2} \sum_{k,\iota \in \mG} \sum_{r,t \in \mI_R}
  |\chi_{\iota,r,t}^\q|^2 \ket{a_{k,r}^\s} \bra{a_{k,r}^\s},
\end{eqnarray}
where $\kappa = \inv{m} \c k$, $\iota = \inv{m} \c j$, and $\iota' = \inv{m} \c j'$.
Thus, $\Tr_A~\hrho'_m$ and $\Tr_B~\hrho'_m$ are independent of $m$,
i.e., Eq.~(\ref{eq:TrAB}) holds.

Next, we show that an OIM can be realized with the procedure that is the same as
in the proof of Theorem~\ref{thm:GU_inc}, except for the preprocessing step.
The procedure after preprocessing is represented
by the POVM $\Pi'$ satisfying Eqs.~(\ref{eq:inc_Pi'k}) and (\ref{eq:inc_Pi'M}).
From Eqs.~(\ref{eq:inc_Pi'k}) and (\ref{eq:sep_rho'}), we have
\begin{eqnarray}
 \hspace{-1em}
 \Tr(\hrho'_m \hPi'_k) &=&
  2M \sum_{r,t \in \mI_R} |\chi_{\inv{m} \c k,r,t}^{(0)}|^2 = \Tr(\hrho_m \hOmega_k),
\end{eqnarray}
and thus Eq.~(\ref{eq:Trrho'Pi'M}) holds.
Therefore, $\Pi'$ is an OIM for $\{ \hrho'_m = \mLsep(\hrho_m) \}$.

\subsection{Multipartite secure measurement for AGU states} \label{subsec:multipartite}

In this section, we extend Theorem~\ref{thm:GU_inc}
to a multipartite secure measurement.
\begin{thm} \label{thm:GU_inc_multipartite}
 For any $N$ with $N \ge 2$, an OIM,
 with any average failure probability, for AGU states
 can be realized with an $N$-partite secure measurement.
\end{thm}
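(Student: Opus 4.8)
The plan is to treat all $N\ge2$ uniformly, recovering Theorem~\ref{thm:GU_inc} at $N=2$, by generalizing the tripartite construction of the main text. As in the bipartite case I would first invoke Lemma~\ref{lemma:inc} to obtain a projective OIM $\Omega=\{\hOmega_m:m\in\mG_?\}$ on $\mHex$ whose detection vectors $\{\ket{\omega_{m,r}^\s}\}$ satisfy the AGU symmetry~(\ref{eq:inc_Ppi_Upi}) and whose failure probability equals the prescribed $p$. Each of the $N$ observers receives a $2MR$-dimensional system with ONB $\{\ket{a_{i,m,r}^\s}:s\in\mI_2,m\in\mG,r\in\mI_R\}$, $i\in\{1,\dots,N\}$, and the preprocessing is $\mL_N(\hX)=\hA_N\hX\hA_N^\dagger$ with $\hA_N=\sum_{q\in\mI_2}\sum_{k\in\mG}\sum_{r\in\mI_R}\ket{\eta_{k,r}^\q}\bra{\omega_{k,r}^\q}$, where

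\begin{eqnarray}
 \ket{\eta_{k,r}^\s} &=& \frac{1}{\sqrt{(2M)^{N-1}}}
  \sum_{s_1,\dots,s_{N-1}\in\mI_2}\ \sum_{t_1,\dots,t_{N-1}\in\mG}
  \left(\bigotimes_{i=1}^{N-1}\ket{a_{i,t_i,r}^{(s_i)}}\right) \nonumber \\
 & & \mbox{}\otimes\ \ket{a_{N,\,\inv{t_1\c\cdots\c t_{N-1}}\c k,\,r}^{(s_1\boplus\cdots\boplus s_{N-1}\boplus s)}}. \nonumber
\end{eqnarray}

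The idea is that this shares the group label $k$ among the observers by additive (group) secret sharing and the binary label $s$ by XOR secret sharing: the observers' position labels multiply to $k$ and their phase labels XOR to $s$, while any $N-1$ of them are uniform. At $N=2$ this is exactly the $\ket{\eta_{k,r}^\s}$ of~(\ref{eq:inc_E}).

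Next I would verify that $\mL_N$ is CPTP and that the measurement is correct. Since $\{\ket{\omega_{k,r}^\q}\}$ is an ONB of $\mHex$ and the $(2M)^{N-1}$ product terms defining each $\ket{\eta_{k,r}^\q}$ are mutually orthonormal, the $\{\ket{\eta_{k,r}^\q}\}$ form an orthonormal set, so $\hA_N^\dagger\hA_N=\ident_\mHex$ and $\mL_N$ is CPTP, just as in Theorem~\ref{thm:GU_inc}. For correctness, each observer $i$ measures in $\{\ket{a_{i,m,r}^\s}\}$, obtaining a position outcome $t_i$ and a phase outcome $s_i$, and the receiver forms the POVM $\Pi'=\{\hPi'_k:k\in\mG_?\}$ that outputs $t_1\c\cdots\c t_N$ when $s_1\boplus\cdots\boplus s_N=0$ and ``failure'' otherwise. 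Writing $\ket{\psi'_{m,r'}}=\hA_N\ket{\psi_{m,r'}}=\sqrt{2M}\sum_{q,k',r}\chi_{k',r,r'}^\q\ket{\eta_{m\c k',r}^\q}$ as in~(\ref{eq:inc_ket_xm}) and noting that in each component the observers' positions multiply to the aggregate label while their phases XOR to $q$, the computation of~(\ref{eq:inc_psim_Pik}) carries over verbatim to give $\Tr(\hrho'_m\hPi'_k)=\Tr(\hrho_m\hOmega_k)$ for $k\in\mG$ and $\Tr(\hrho'_m\hPi'_?)=\Tr(\hrho_m\hOmega_?)$; hence $\Pi'$ realizes the same OIM, with the same failure probability $p$, as $\Omega$.

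The hardest step is the security condition: for $N$ parties it demands that the reduced state of every group of $N-1$ observers be independent of $m$, and since any smaller group is a further partial trace it suffices to trace out a single observer $i$ and show $\Tr_i\hrho'_m$ is $m$-independent for each $i$. By~(\ref{eq:inc_ket_xm}) this reduces to showing $\Tr_i\ket{\eta_{m\c k',r}^\q}\bra{\eta_{m\c l',r}^\u}$ is $m$-independent, and the asymmetric role of the $N$th observer forces two cases. If $i=N$, tracing the aggregate system produces the constraint $\inv{t_1\c\cdots\c t_{N-1}}\c m\c k'=\inv{t_1'\c\cdots\c t_{N-1}'}\c m\c l'$, whose two occurrences of $m$ cancel because $\mG$ is Abelian, so no $m$-dependence survives. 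If $i<N$, tracing the free system $i$ identifies its ket and bra labels, after which the substitution $t_i\mapsto m\c t_i$ shifts only the single $t_i$-factor in observer $N$'s label from $\inv{t_i}$ to $\inv{m}\c\inv{t_i}$, removing $m$ from observer $N$'s index while leaving the operator invariant because the sum over $t_i\in\mG$ is unchanged under group translation. I expect this relabeling argument --- making the additive/XOR secret-sharing structure rigorous uniformly in $N$ while correctly tracking the two kinds of observer, the independent ket and bra summations, and the rank index $r$ --- to be the main obstacle; the CPTP and correctness steps are routine extensions of Theorem~\ref{thm:GU_inc}. Finally, exactly as in Subsec.~\ref{subsubsec:separable}, $\mL_N$ can be replaced by a variant returning a separable state, completing the claim for every $N\ge2$.
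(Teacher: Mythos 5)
Your proposal is correct and takes essentially the same approach as the paper: the same Lemma~\ref{lemma:inc} projective OIM, the identical preprocessing isometry (your $\ket{\eta_{k,r}^\s}$, written with the first $N-1$ labels free and the last observer carrying the aggregate, is exactly the paper's sum over $\mS_q$ and $\mG_k$ with $C=(2M)^{(N-1)/2}$), the same local measurements with product/parity decoding, and the same reduction of security to tracing out one observer at a time. The only difference is bookkeeping: where you split into the cases ``trace the aggregate observer'' (Abelian cancellation of $m$) versus ``trace a free observer'' (the translation $t_i \mapsto m \c t_i$), the paper treats every $\nu$ uniformly by re-parametrizing $\mG_k$ through the telescoping products $t_{n,\nu}(\tau,k)=\inv{\tau_{n-1}} \c k_{n,\nu} \c \tau_n$, which places the $m$-dependence on observer $\nu$'s index alone before tracing.
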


\begin{proof}
In preparation, let us represent the preprocessing performed by $N$ observers as the CPTP map $\mL$.
Consider an AGU state set $\Psi = \{ \hrho_m : m \in \mG \}$.
Let $\Omega$ and $\{ \ket{\omega_{m,r}^\s} \}$ be the OIM for $\Psi$ and
the ONB obtained by Lemma~\ref{lemma:inc}, respectively.
Also, let $\mS_q$ be the entire set of a collection of $N$ elements of $\mI_2$,
denoted by $s = (s_0, s_1, \cdots, s_{N-1})$, satisfying $s_0 \boplus s_1 \boplus \cdots \boplus s_{N-1} = q$,
and $\mG_k$ be the entire set of a collection of $N$ elements of $\mG$,
denoted by $t = (t_0, t_1, \cdots, t_{N-1})$, satisfying $t_0 \c t_1 \c \cdots \c t_{N-1} = k$.
Assume that $\mL$ can be expressed by $\mL(X) = \hA X \hA^\dagger$ with
\begin{eqnarray}
 \hA &=& \sum_{q \in \mI_2} \sum_{k \in \mG} \sum_{r \in \mI_R}
  \ket{\eta_{k,r}^\q} \bra{\omega_{k,r}^\q}, \nonumber \\
 \ket{\eta_{k,r}^\q} &=& \frac{1}{C} \sum_{s \in \mS_q} \sum_{t \in \mG_k}
  \left( \bigotimes_{n \in \mI_N} \ket{\mu_{t_n,r}^{(s_n)}}_n \right), \label{eq:inc_mul_A}
\end{eqnarray}
and $C = (2M)^{(N-1)/2}$.
For each $n \in \mI_N$, $\{ \ket{\mu_{m,r}^\s}_n : s \in \mI_2, m \in \mG, r \in \mI_R \}$ denotes
an ONB in the $2MR$-dimensional system of the $n$-th observer.
Here, we rewrite $\ket{\eta_{k,r}^\q}$ of Eq.~(\ref{eq:inc_mul_A}) in yet another form.
Let $\mG^{N-1}$ be the entire set of a collection of $N-1$ elements of $\mG$.
For each $n,\nu \in \mI_N$, $\tau = (\tau_0, \cdots, \tau_{N-2}) \in \mG^{N-1}$, and $k \in \mG$,
let $t_{n,\nu}(\tau,k) \in \mG$ be
\begin{eqnarray}
 t_{n,\nu}(\tau,k) &=& \inv{\tau_{n-1}} \c k_{n,\nu} \c \tau_n, \label{eq:tn}
\end{eqnarray}
where $k_{n,\nu} = k$ if $n = \nu$ and $k_{n,\nu} = e$ otherwise, and $\tau_{-1} = \tau_{N-1} = e$.
For example, in the case of $N = 3$ and $\nu = 1$,
$t_{0,\nu}(\tau,k) = \tau_0$, $t_{1,\nu}(\tau,k) = \inv{\tau_0} \c k \c \tau_1$,
and $t_{2,\nu}(\tau,k) = \inv{\tau_1}$.
Here, for a fixed $\nu \in \mI_N$,
let $t = (t_{0,\nu}(\tau,k), \cdots, t_{N-1,\nu}(\tau,k))$ for a given $\tau \in \mG^{N-1}$ and $k \in \mG$;
then,
\begin{eqnarray}
 t_0 \c \cdots \c t_{N-1} &=& t_{0,\nu}(\tau,k) \c \cdots \c t_{N-1,\nu}(\tau,k) \nonumber \\
 &=& k,
\end{eqnarray}
i.e., $t \in \mG_k$, always holds from Eq.~(\ref{eq:tn}).
This implies that for a fixed $k \in \mG$, the elements of $\mG^{N-1}$, $\tau$,
are in one-to-one correspondence with the elements of $\mG_k$, $t$.
Thus, for any $\nu \in \mI_N$, $\ket{\eta_{k,r}^\q}$ can be rewritten as
\begin{eqnarray}
 \hspace{-1em}
 \ket{\eta_{k,r}^\q} &=& \frac{1}{C} \sum_{s \in \mS_q} \sum_{\tau \in \mG^{N-1}}
  \left( \bigotimes_{n \in \mI_N} \ket{\mu_{t_{n,\nu}(\tau,k),r}^{(s_n)}}_n \right). \label{eq:eta2}
\end{eqnarray}

We can prove this theorem in a similar way as Theorem~\ref{thm:GU_inc}.
We first show that an $N$-partite secure measurement can be realized with the preprocessing $\mL$
and next show that an OIM can be realized with
measurements, independently performed by $N$ observers, for the state $\hrho'_m = \mL(\hrho_m)$.

First, we show that an $N$-partite secure measurement can be realized.
It suffices to show that for any $\nu \in \mI_N$, $\{ \hrho'_m : m \in \mG \}$ satisfies
\begin{eqnarray}
 \Tr_\nu~\hrho'_j &=& \Tr_\nu~\hrho'_k, \label{eq:Trn}
\end{eqnarray}
where $\Tr_\nu$ is the partial trace over the system of the $\nu$-th observer.
Let $\chi_{k,r,r'}^\s = \braket{\omega_{k,r}^\s | \psi_{e,r'}} / C$; then we have
\begin{eqnarray}
\hspace{-1em}
 \braket{\omega_{k,r}^\s | \hpsi_{m,r'}} &=&
 \braket{\omega_{\inv{m} \c k,r}^\s | \hP^\dagger \hU_m^\dagger \hU_m | \hpsi_{e,r'}}
 \nonumber \\
 &=& \braket{\omega_{\inv{m} \c k,r}^\s | \hpsi_{e,r'}}
  = C \chi_{\inv{m} \c k,r,r'}^\s. \label{eq:inc_mul_pi_hpsi}
\end{eqnarray}
Equations (\ref{eq:inc_mul_A}) and (\ref{eq:inc_mul_pi_hpsi}) yield
\begin{eqnarray}
 \ket{\psi'_{m,r'}} &=& A \ket{\psi_{m,r'}} \nonumber \\
 &=& C \sum_{q \in \mI_2} \sum_{k \in \mG} \sum_{r \in \mI_R}
  \chi_{\inv{m} \c k,r,r'}^\q \ket{\eta_{k,r}^\q} \nonumber \\
 &=& C \sum_{q \in \mI_2} \sum_{k' \in \mG} \sum_{r \in \mI_R}
  \chi_{k',r,r'}^\q \ket{\eta_{m \c k',r}^\q}, \label{eq:inc_mul_ket_xm}
\end{eqnarray}
where $k' = \inv{m} \c k$.
In contrast, Eq.~(\ref{eq:eta2}) yields
\begin{eqnarray}
 \lefteqn{ \Tr_\nu \ket{\eta_{m \c k',r}^\q} \bra{\eta_{m \c l',r}^\u} } \nonumber \\
 &=& \frac{1}{C^2} \Tr_\nu \Biggl[ \sum_{s \in \mS_q} \sum_{\tau \in \mG^{N-1}}
  \left( \bigotimes_{n \in \mI_N} \ket{\mu_{t_{n,\nu}(\tau,m \c k'),r}^{(s_n)}}_n \right) \nonumber \\
 & & \times \sum_{s' \in \mS_u} \sum_{\tau' \in \mG^{N-1}}
  \left( \bigotimes_{n \in \mI_N} \bra{\mu_{t_{n,\nu}(\tau',m \c l'),r}^{(s'_n)}}_n \right) \Biggr] \nonumber \\
 &=& \frac{1}{C^2} \sum_{s \in \mS_q} \sum_{s' \in \mS_u} \sum_{\tau, \tau' \in \mG^{N-1}}
  \epsilon(\tau,\tau',m \c k',m \c l') \nonumber \\
 & & \times \left( \bigotimes_{\substack{n \in \mI_N \\ n \neq \nu}} \ket{\mu_{t_{n,\nu}(\tau,m \c k'),r}^{(s_n)}}
			\bra{\mu_{t_{n,\nu}(\tau',m \c l'),r}^{(s'_n)}}_n \right), \label{Tr_nu_eta}
\end{eqnarray}
where
\begin{eqnarray}
 \epsilon(\tau,\tau',m \c k',m \c l') &=&
  \left\{
   \begin{array}{ll}
	1, & ~ t_{\nu,\nu}(\tau,k') = t_{\nu,\nu}(\tau',l'), \\
	0, & ~ \mbox{otherwise}. \\
   \end{array} \right. \nonumber \\
 \label{eq:epsilon}
\end{eqnarray}
We should note that $t_{\nu,\nu}(\tau,k') = t_{\nu,\nu}(\tau',l')$ in Eq.~(\ref{eq:epsilon})
is equivalent to $t_{\nu,\nu}(\tau,m \c k') = t_{\nu,\nu}(\tau',m \c l')$, since $\mG$ is Abelian.
Since $t_{n,\nu}(\tau,m \c k')$ is independent of $m$ whenever $n \neq \nu$, which follows from Eq.~(\ref{eq:tn}),
and $\epsilon(\tau,\tau',m \c k',m \c l')$ is also independent of $m$,
the right-hand side of the last equality of Eq.~(\ref{Tr_nu_eta}) is independent of $m$.
Thus, from Eq.~(\ref{eq:inc_mul_ket_xm}),
$\Tr_\nu \ket{\psi'_{m,r'}}\bra{\psi'_{m,r'}}$ is independent of $m$.
Therefore, $\Tr_\nu~\hrho'_m = \Tr_\nu \sum_{r' \in \mI_R} \ket{\psi'_{m,r'}}\bra{\psi'_{m,r'}}$
is also independent of $m$,
which means that Eq.~(\ref{eq:Trn}) holds for any $\nu \in \mI_N$.

Next, we show that an OIM can be realized with
measurements, independently performed by $N$ observers, for the state $\hrho'_m$.
We consider the following procedure:
For each $n \in \mI_N$, the $n$-th observer independently performs the measurement
in the ONB $\{ \ket{\mu_{t_n,r_n}^{(s_n)}}_n : s_n \in \mI_2, t_n \in \mG, r_n \in \mI_R \}$
and sends his/her outcome (denoted by $\ket{\mu_{t_n,r_n}^{(s_n)}}_n$) to the receiver
(note that from Eq.~(\ref{eq:inc_mul_A}), $r_0 = r_1 = \cdots = r_{N-1}$ always holds).
Let $k = t_0 \c t_1 \c \cdots \c t_{N-1}$.
The receiver records his/her result as $k$, which corresponds to $\ket{\psi_k}$,
if $s_0 \boplus s_1 \boplus \cdots \boplus s_{N-1} = 0$ and ``failure'' otherwise.
This procedure can be represented by the POVM $\Pi' = \{ \hPi'_m : m \in \mG_? \}$ with
\begin{eqnarray}
 \hPi'_k &=& \sum_{s \in \mS_0} \sum_{t \in \mG_k} \sum_{r \in \mI_R}
 \left( \bigotimes_{n \in \mI_N} \ket{\mu_{t_n,r}^{(s_n)}} \bra{\mu_{t_n,r}^{(s_n)}}_n \right)
 \label{eq:inc_mul_Pi'k}
\end{eqnarray}
for each $k \in \mG$ and
\begin{eqnarray}
 \hspace{-2em}
 \hPi'_? &=& \sum_{s \in \mS_1} \sum_{k \in \mG} \sum_{t \in \mG_k} \sum_{r \in \mI_R}
 \left( \bigotimes_{n \in \mI_N} \ket{\mu_{t_n,r}^{(s_n)}} \bra{\mu_{t_n,r}^{(s_n)}}_n \right).
 \label{eq:inc_mul_Pi'M}
\end{eqnarray}
From Eqs.~(\ref{eq:inc_mul_A}) and (\ref{eq:inc_mul_Pi'k}),
for any $q, s \in \mI_2$, $k, t, l \in \mG$, and $r' \in \mI_R$, we have
\begin{eqnarray}
 \braket{\eta_{t,r}^\q | \hPi'_k | \eta_{l,r'}^\s} &=&
  \delta_{q,0} \delta_{s,0} \delta_{t,k} \delta_{l,k} \delta_{r,r'},
\end{eqnarray}
and thus, from Eq.~(\ref{eq:inc_mul_ket_xm}),
for any $k, m \in \mG$ and $r' \in \mI_R$, we have
\begin{eqnarray}
 \braket{\psi'_{m,r'} | \hPi'_k | \psi'_{m,r'}} &=&
  C^2 \sum_{r \in \mI_R} |\chi_{\inv{m} \c k,r,r'}^{(0)}|^2 \nonumber \\
 &=& \braket{\psi_{m,r'} | \hOmega_k | \psi_{m,r'}},
\end{eqnarray}
where the second line follows from Eq.~(\ref{eq:inc_mul_pi_hpsi}).
Thus, we have
\begin{eqnarray}
 \Tr(\hrho'_m \hPi'_k) &=& \sum_{r' \in \mI_R} \braket{\psi'_{m,r'} | \hPi'_k | \psi'_{m,r'}} \nonumber \\
 &=& \sum_{r' \in \mI_R} \braket{\psi_{m,r'} | \hOmega_k | \psi_{m,r'}} = \Tr(\hrho_m \hOmega_k), \nonumber \\
 \Tr(\hrho'_m \hPi'_?) &=& 1 - \sum_{k \in \mG} \Tr(\hrho'_m \hPi'_k) \nonumber \\
 &=& 1 - \sum_{k \in \mG} \Tr(\hrho_m \hOmega_k) = \Tr(\hrho_m \hOmega_?),
\end{eqnarray}
which indicates that the average correct and failure probabilities of $\Pi'$ for $\Psi'$
are identical to those of $\Omega$ for $\Psi$, respectively.
Therefore, this procedure realizes an OIM.
\QED
\end{proof}

\section{Proof of Lemma~\ref{lemma:inc}} \label{append:lemma_inc}

\subsection{Preparations}

Before we provide the proof, we provide definitions and facts.
Let $\Pi = \{ \hPi_m : m \in \mG_? \}$ be an OIM on $\mH$,
with the average failure probability of $p$, for $\Psi$.
From Refs.~\cite{Eld-2003-inc,Nak-Usu-2013-group},
we can assume without loss of generality that
\begin{eqnarray}
 \hPi_{m \c k} &=& \hU_m \hPi_k \hU_m^\dagger, ~~~ m,k \in \mG, \nonumber \\
 \hPi_? &=& \hU_m \hPi_? \hU_m^\dagger, ~~~ m \in \mG. \label{eq:inc_Pi}
\end{eqnarray}
We can choose proper vectors $\{ \ket{\pi_{m,r}} : m \in \mG, r \in \mI_R \}$ such that
\begin{eqnarray}
 \hPi_m &=& \sum_{r \in \mI_R} \ket{\pi_{m,r}} \bra{\pi_{m,r}}, ~~~ m \in \mG, \nonumber \\
 \ket{\pi_{m \c k,r}} &=& \hU_m \ket{\pi_{k,r}}, ~~~ m,k \in \mG. \label{eq:Pi_pi}
\end{eqnarray}
Indeed, if we choose $\{ \ket{\pi_{e,r}} : r \in \mI_R \}$ such that
$\hPi_e = \sum_{r \in \mI_R} \ket{\pi_{e,r}} \bra{\pi_{e,r}}$
and let $\ket{\pi_{m,r}} = \hU_m \ket{\pi_{e,r}}$, then Eq.~(\ref{eq:Pi_pi}) holds.
Let $D = \dim~\mH$ and $\tmH$ be an $MR$-dimensional Hilbert space satisfying $\mH \subseteq \tmH \subseteq \mHex$.
$\tmH$ always exists since $D \le MR < 2MR$ holds, and $\tmH = \mH$ obviously holds if $D = MR$.
Also, let $\hLambda = (\ident_\mH - \hPi_?)^{1/2}$; then
the Schatten decomposition of $\hLambda$ can be represented by
\begin{eqnarray}
 \hLambda &=& \sum_{d \in \mI_{MR}} \sqrt{1 - \lambda_d} \ket{\phi_d} \bra{\phi_d}, \label{eq:inc_PiM}
\end{eqnarray}
where $\{ \ket{\phi_d} : d \in \mI_{MR} \}$ is an ONB in $\tmH$.
Let $\mHL = \supp~\hLambda$.
Since $\mHL \subseteq \mH$ holds, we assume without loss of generality that
$\{ \ket{\phi_d} : d \in \mI_D \}$ spans the space $\mH$,
and thus, $\lambda_d = 1$ holds for any $d$ with $D \le d < MR$.
$\hP$ is expressed by
\begin{eqnarray}
 \hP &=& \sum_{d \in \mI_D} \ket{\phi_d} \bra{\phi_d}. \label{eq:P}
\end{eqnarray}

Now, we show that there exist two ONBs $\{ \ket{v_{m,r}^{(0)}} : m \in \mG, r \in \mI_R \}$ and
$\{ \ket{v_{m,r}^{(1)}} : m \in \mG, r \in \mI_R \}$ in $\tmH$ such that
\begin{eqnarray}
 \hLambda \ket{v_{m,r}^{(0)}} &=& \ket{\pi_{m,r}}, ~~~ m \in \mG, r \in \mI_R, \label{eq:Lomega_pi} \\
 \hP \ket{v_{m \c k,r}^{(1)}} &=& \hU_m \hP \ket{v_{k,r}^{(1)}}, ~~~ m,k \in \mG, r \in \mI_R.
  \label{eq:omega_Uomega}
\end{eqnarray}
First, we show an ONB $\{ \ket{v_{m,r}^{(0)}} \}$ satisfying Eq.~(\ref{eq:Lomega_pi}).
Let $\ket{\pi'_{m,r}} = \hLambda^+ \ket{\pi_{m,r}}$
($\hLambda^+$ is the Moore-Penrose inverse of $\hLambda$); then from Eq.~(\ref{eq:Pi_pi}), we obtain
\begin{eqnarray}
 \sum_{m \in \mG} \sum_{r \in \mI_R} \ket{\pi'_{m,r}}\bra{\pi'_{m,r}}
  &=& \hLambda^+ \left( \sum_{m \in \mG} \hPi_m \right) \hLambda^+ \nonumber \\
 &=& \hLambda^+ \hLambda^2 \hLambda^+ = \ident_\mHL.
\end{eqnarray}
Thus, $\{ \ket{\pi'_{m,r}}\bra{\pi'_{m,r}} : m \in \mG, r \in \mI_R \}$ is a POVM on $\mHL$.
From Naimark's theorem, there exists an ONB, denoted by $\{ \ket{v_{m,r}^{(0)}} \}$, in $\tmH$
such that $\hP_\hLambda \ket{v_{m,r}^{(0)}} = \ket{\pi'_{m,r}}$ \cite{Akh-Gla-1966},
where $\hP_\hLambda$ is the orthogonal projection operator from $\mH_1$ to $\mHL$.
It follows that, for any $m \in \mG$ and $r \in \mI_R$,
\begin{eqnarray}
 \hLambda \ket{v_{m,r}^{(0)}} &=& \hLambda \ket{\pi'_{m,r}} = \ket{\pi_{m,r}},
\end{eqnarray}
i.e., Eq.~(\ref{eq:Lomega_pi}) holds.
Next, we show an ONB $\{ \ket{v_{m,r}^{(1)}} \}$ satisfying Eq.~(\ref{eq:omega_Uomega}).
Let $\Pime = \{ \hPime_m : m \in \mG \}$ be a minimum-error measurement on $\mH$ for $\Psi$.
Let us choose $\ket{\pime_{m,r}}$ such that
\begin{eqnarray}
 \hPime_m &=& \sum_{r \in \mI_R} \ket{\pime_{m,r}} \bra{\pime_{m,r}}, ~~~ m \in \mG, \nonumber \\
 \ket{\pime_{m \c k,r}} &=& \hU_m \ket{\pime_{k,r}}, ~~~ m,k \in \mG.
\end{eqnarray}
In the same manner as in Eq.~(\ref{eq:Pi_pi}), we can always choose such $\ket{\pime_{m,r}}$.
From Naimark's theorem, there exists an ONB, denoted by $\{ \ket{v_{m,r}^{(1)}} \}$, in $\tmH$
such that $\hP \ket{v_{m,r}^{(1)}} = \ket{\pime_{m,r}}$.
Thus,
\begin{eqnarray}
 \hP \ket{v_{m \c k,r}^{(1)}} &=& \ket{\pime_{m \c k,r}} = \hU_m \ket{\pime_{k,r}}
  = \hU_m \hP \ket{v_{k,r}^{(1)}}, \nonumber \\
\end{eqnarray}
i.e., Eq.~(\ref{eq:omega_Uomega}) holds.

\subsection{Derivation of $\{ \ket{\omega_{m,r}^\s} \}$}

Let us remind that $\mHex$ is a $2MR$-dimensional Hilbert space satisfying $\mHex \supseteq \tmH$.
We choose an ONB $\{ \ket{\phi_d^\s} : s \in \mI_2, d \in \mI_{MR} \}$ in $\mHex$ such that
for any $d \in \mI_{MR}$,
\begin{eqnarray}
 \hP_1 \ket{\phi_d^{(0)}} &=& \sqrt{1 - \lambda_d} \ket{\phi_d}, \nonumber \\
 \hP_1 \ket{\phi_d^{(1)}} &=& \sqrt{\lambda_d} \ket{\phi_d},
  \label{eq:inc_phi2}
\end{eqnarray}
where $\hP_1$ is the orthogonal projection operator from $\mHex$ to $\mH_1$.
This implies that the one-dimensional subspace $\spn(\ket{\phi_d})$ of $\tmH$ is associated with
the two-dimensional subspace $\spn(\ket{\phi_d^{(0)}}, \ket{\phi_d^{(1)}})$ of $\mHex$.
We define $\ket{\omega_{m,r}^\s}$ as
\begin{eqnarray}
 \ket{\omega_{m,r}^\s} &=& \hF_s \ket{v_{m,r}^\s}, ~~~ s \in \mI_2, m \in \mG, r \in \mI_R, \nonumber \\
 \hF_s &=& \sum_{d \in \mI_{MR}} \ket{\phi_d^\s}\bra{\phi_d}, ~~~ s \in \mI_2.
  \label{eq:inc_Fs}
\end{eqnarray}
We can easily verify that $\{ \ket{\omega_{m,r}^\s} : s \in \mI_2, m \in \mG, r \in \mI_R \}$
is an ONB in $\mHex$.
Indeed, we obtain
\begin{eqnarray}
 \braket{\omega_{m,r}^\s | \omega_{m',r'}^\s} &=& \braket{v_{m,r}^\s | v_{m',r'}^\s} = \delta_{m,m'} \delta_{r,r'},
\end{eqnarray}
where $\delta_{a,b}$ is the Kronecker delta, which follows from $\hF_s$ being an isometric mapping,
and $\braket{\omega_{m,r}^{(0)} | \omega_{m',r'}^{(1)}} = 0$ holds from $\hF_0^\dagger \hF_1 = 0$.

Let us consider the PVM $\Omega$ defined by Eq.~(\ref{eq:inc_tPi}) with this ONB $\{ \ket{\omega_{m,r}^\s} \}$.
We will prove that $\Omega$ is an OIM, with the average failure probability of $p$,
for $\Psi$ and that Eq.~(\ref{eq:inc_Ppi_Upi}) holds.

\subsection{Proof of $\Omega$ to be an OIM}

Here, we show that $\Omega$ is an OIM, with the average failure probability of $p$,
for $\Psi$.
From Eqs.~(\ref{eq:inc_PiM}), (\ref{eq:P}), (\ref{eq:inc_phi2}), and (\ref{eq:inc_Fs}), we have
\begin{eqnarray}
 \hP \hF_0 &=& \sum_{d \in \mI_D} \sum_{j \in \mI_{MR}} \ket{\phi_d} \braket{\phi_d | \phi_j^{(0)}} \bra{\phi_j}
  \nonumber \\
 &=& \sum_{d \in \mI_D} \sqrt{1 - \lambda_d} \ket{\phi_d} \bra{\phi_d} = \hLambda, \nonumber \\
 \hP \hF_1 &=& \sum_{d \in \mI_D} \sqrt{\lambda_d} \ket{\phi_d} \bra{\phi_d}
 = (\ident_\mH - \hLambda^2)^{1/2} = \hPi_?^{1/2}. \nonumber \\
 \label{eq:inc_PFs}
\end{eqnarray}
From Eqs.~(\ref{eq:Lomega_pi}), (\ref{eq:inc_Fs}), and (\ref{eq:inc_PFs}),
for any $m \in \mG$ and $r \in \mI_R$, we have
\begin{eqnarray}
 \hP \ket{\omega_{m,r}^{(0)}} &=& \hLambda \ket{v_{m,r}^{(0)}} = \ket{\pi_{m,r}}, \label{eq:Ppi_pi}
\end{eqnarray}
which gives $\hP \hOmega_m \hP^\dagger = \hPi_m$ for any $m \in \mG$.
Also, we have
\begin{eqnarray}
 \hP \hOmega_? \hP^\dagger &=& \hP \left( \ident_\mHex - \sum_{m \in \mG} \hOmega_m \right) \hP^\dagger \nonumber \\
 &=& \ident_\mH - \sum_{m \in \mG} \hPi_m ~=~ \hPi_?.
  \label{eq:inc_PPi}
\end{eqnarray}
Therefore, $\Omega$ is an optimal solution of problem~(\ref{eq:inc_problem}) as well as $\Pi$, i.e.,
$\Omega$ is an OIM, with the average failure probability of $p$, for $\Psi$.

\subsection{Proof of Eq.~(\ref{eq:inc_Ppi_Upi})}

Here, we show Eq.~(\ref{eq:inc_Ppi_Upi}).
From Eqs.~(\ref{eq:Pi_pi}) and (\ref{eq:Ppi_pi}),
for any $m,k \in \mG$ and $r \in \mI_R$, we have
\begin{eqnarray}
 \hP \ket{\omega_{m \c k,r}^{(0)}} &=& \ket{\pi_{m \c k,r}} = \hU_m \ket{\pi_{k,r}}
  = \hU_m \hP \ket{\omega_{k,r}^{(0)}}. \nonumber \\
\end{eqnarray}
Moreover, from Eqs.~(\ref{eq:omega_Uomega}), (\ref{eq:inc_Fs}), and (\ref{eq:inc_PFs}),
for any $m,k \in \mG, r \in \mI_R$, we have
\begin{eqnarray}
 \hP \ket{\omega_{m \c k,r}^{(1)}} &=& \hPi_?^{1/2} \ket{v_{m \c k,r}^{(1)}}
  = \hPi_?^{1/2} \hP \ket{v_{m \c k,r}^{(1)}} \nonumber \\
 &=& \hPi_?^{1/2} \hU_m \hP \ket{v_{k,r}^{(1)}}
  = \hU_m \hPi_?^{1/2} \ket{v_{k,r}^{(1)}} \nonumber \\
 &=& \hU_m \hP \ket{\omega_{k,r}^{(1)}},
\end{eqnarray}
where the fourth equality follows since $\hPi_?^{1/2}$ commutes with $\hU_m$ from Eq.~(\ref{eq:inc_Pi}).
Therefore, Eq.~(\ref{eq:inc_Ppi_Upi}) holds.
\QED

%

\end{document}